\numberwithin{equation}{section}
\theoremstyle{plain}
\newtheorem{assumption}{Assumption}[section]
\newtheorem{definition}[assumption]{Definition}
\numberwithin{equation}{section} 
\theoremstyle{plain}
\newtheorem{theorem}{Theorem}[section]
\newtheorem{lemma}[theorem]{Lemma}
\newtheorem{proposition}[theorem]{Proposition}
\newtheorem{corollary}[theorem]{Corollary} 
\title{\bfseries Metastable Transitions and $\Gamma$--Convergent Eyring--Kramers Asymptotics in Landau--QCD Gradient Systems}
\author{
    Jingxu Wu$^{1,2,\dagger}$\thanks{E-mail: wuxj@my.msu.ru} \and
    Jie Shi$^{1}$
}
\date{
    $^{1}$Faculty of Physics, Lomonosov Moscow State University, Moscow 119991, Russia\\
    $^{2}$GeZhi Theoretical Physics Reading Group\\[1em]
    \today
}
\begin{document}
\maketitle
\begin{abstract}
We develop a rigorous analytical framework for metastable stochastic transitions in Landau--type gradient systems inspired by QCD phenomenology. 
The functional $F(\sigma;u)=\int_\Omega [\frac{\kappa}{2}|\nabla\sigma|^2+V(\sigma;u)]\,dx$, depending smoothly on a control parameter $u\in\mathcal U$, 
is analyzed through the Euler--Lagrange map $\mathcal{E}(\sigma;u)=-\kappa\Delta\sigma+V'(\sigma;u)$ and its Hessian 
$\mathcal{L}_{\sigma,u}=-\kappa\Delta+V''(\sigma;u)$. 
By combining variational methods, $\Gamma$-- and Mosco convergence, and spectral perturbation theory, 
we establish the persistence and stability of local minima and index--one saddles under parameter deformations and variational discretizations. 
The associated mountain--pass solutions form Cerf--continuous branches away from the discriminant set 
$\mathcal D=\{u:\det\mathcal L_{\sigma,u}=0\}$, whose crossings produce only fold or cusp catastrophes in generic one-- and two--parameter slices. 
The $\Gamma$--limit is taken with respect to the $L^2(\Omega)$ topology, ensuring compactness, convergence of gradient flows, and spectral continuity of $\mathcal L_{\sigma,u}$. 
As a consequence, the Eyring--Kramers formula for the mean transition time between metastable wells retains quantitative validity under both parameter deformations and discretization refinement, 
with convergent free--energy barriers, unstable eigenvalues, and zeta--regularized determinant ratios. 
This construction unifies the classical intuition of Eyring, Kramers, and Langer with modern variational and spectral analysis, 
providing a mathematically consistent and physically transparent foundation for metastable decay and phase conversion in Landau--QCD--type systems.
\end{abstract}

\tableofcontents
\section{Introduction}

Metastability and noise--induced rare transitions are fundamental in nonlinear systems with multistable free--energy landscapes. In statistical field theories and effective symmetry--breaking models, thermally activated escapes from a metastable basin determine nucleation kinetics, conversion fronts, and out--of--equilibrium relaxation pathways. The classical Eyring--Kramers (EK) paradigm---originating with Eyring and Kramers and put on a field--theoretic footing by Langer---predicts that mean exit times admit an Arrhenius law whose exponential rate equals the energy barrier between a metastable state and a critical saddle, while the prefactor is governed by local quadratic fluctuations at the well and at the saddle \cite{Eyring1935,Kramers1940,Langer1969,Langer1973}. During the last decades this picture has been rigorously established in finite dimensions via large deviations and potential theory, yielding sharp capacity and eigenvalue asymptotics for reversible diffusions, and clarified further for slowly driven or weakly irreversible dynamics where the exponential scaling persists \cite{FreidlinWentzell2012,BovierDenHollander2015,BerglundGentz2006,Day1983,MaierStein1993,HeymannVandenEijnden2008}.

In parallel, QCD--inspired effective descriptions at finite temperature and density furnish coarse--grained order parameters---e.g., a chiral condensate $\sigma$ and, where appropriate, a Polyakov loop---whose Landau--type energy functionals reproduce salient features of the QCD phase diagram. In regimes where the local potential is bistable, deterministic gradient flows drive the system toward one of two competing phases separated by an activation barrier; small stochastic perturbations then induce exponentially rare transitions across basins. Near bifurcation sets (spinodals, tricritical points), universal scalings such as $\Delta F \sim (u-u_c)^{3/2}$ for the barrier and $|\lambda_{-}|\sim (u-u_c)^{1/2}$ for the unique unstable curvature capture critical slowing down and enhanced fluctuations, thereby connecting local normal forms to global kinetics.

We develop a self--contained analytic framework for metastable transitions in infinite--dimensional Landau--type gradient systems on a Gelfand triple $V\hookrightarrow H\hookrightarrow V^{\!*}$ over a bounded Lipschitz domain. Our contributions are threefold:
\begin{itemize}
  \item[(i)] \emph{Deterministic variational and spectral structure.} For the sextic Landau--Ginzburg energy we formulate the $L^2$-- and $H^{-1}$--gradient flows, prove global well--posedness and energy dissipation, and characterize $\omega$--limit sets. Under analyticity, a Lojasiewicz--Simon inequality yields convergence to equilibria with rates. Around the mountain--pass critical nucleus $\sigma^{\dagger}$ we construct local invariant manifolds and extract the spectral data $(\lambda_{-},\{\mu_j\}_{j\ge 1})$ needed for sharp asymptotics, including Agmon--type localization.
  \item[(ii)] \emph{Field--theoretic EK asymptotics.} For the noisy $L^2$--gradient flow 
  \[
    d\sigma_t \;=\; -\,\Gamma\,\nabla_{L^2}F(\sigma_t)\,dt + \varepsilon\,dW_t 
    \;=\; \Gamma\big(\kappa\Delta\sigma_t - V'(\sigma_t)\big)dt + \varepsilon\,dW_t,
  \]
  driven by a trace--class $Q$--Wiener process, we justify an EK law for the principal exit time from a metastable domain:
  \[
    \mathbb{E}\,\tau \;\sim\; \frac{2\pi}{\Gamma\,|\lambda_{-}(\sigma^{\dagger})|}
    \sqrt{\frac{\det\!\big(L_{\sigma_{\rm meta}}\big)}{\det{}'\!\big(L_{\sigma^{\dagger}}\big)}}\,
    \exp\!\Big(\frac{\Delta F}{\varepsilon^{2}}\Big),
  \]
  where $\Delta F$ is the barrier height and the functional determinants are defined via spectral zeta regularization (negative mode removed at the saddle). The exponent stems from reversible large deviations; the prefactor follows from a Gaussian capacity computation in coordinates adapted to the unstable direction and a well/saddle determinant ratio that is ultraviolet--finite by cancellation.
  \item[(iii)] \emph{Deformations and $\Gamma$--convergence.} We show that barriers, saddles, and spectral data vary continuously under smooth parameter changes organized by Thom--type fold/cusp catastrophes, and that they are stable under conforming variational discretizations. In particular, $\Gamma$--convergence of energies and Mosco convergence of subdifferentials guarantee the persistence and convergence of mountain--pass levels, index--one saddles, unstable curvatures, and zeta--determinant ratios under refinement, ensuring consistency of EK parameters across scales \cite{Braides2002,Struwe2008,Kato1995,AmbrosioGigliSavare2005,Agmon1982,Brezis2010,Temam1997}.
\end{itemize}

Section~2 sets the functional--analytic setting, structural hypotheses on the sextic potential, and boundary conditions. Section~3 develops well--posedness, dissipation, convergence (via Lojasiewicz--Simon), and small--spectrum stability with explicit rates away from bifurcations. Section~4 analyzes the index--one saddle: invariant manifolds, spectral gap, Agmon metric localization, and smooth parameter dependence via Kato's perturbation theory. Section~5 derives the field--theoretic EK law by combining Freidlin--Wentzell large deviations, potential theory/capacities, and zeta--regularized determinant ratios. Section~6 organizes deformations through catastrophe/Morse classification and proves stability via $\Gamma$-- and Mosco--convergence; Section~7 discusses implications for QCD--like media and extensions to underdamped or irreversible dynamics.

Whereas EK--type results for finite--dimensional diffusions are classical and field--theoretic prefactors are widely used in physics, our contribution synthesizes (a) infinite--dimensional variational and spectral regularity (Lojasiewicz--Simon, Agmon decay), (b) a zeta--determinant prefactor with ultraviolet cancellation between well and saddle, and (c) a robust deformation theory based on catastrophes and $\Gamma$--limits, into a single analytic framework tailored to Landau--type effective models. This yields a transparent bridge from microscopic fluctuations to macroscopic kinetics that is stable under parameter changes and discretization, and clarifies which ingredients are universal (barrier dominance, index structure) versus model--dependent (spectra, boundary conditions).

\section{Problem setting and assumptions}\label{sec:setting}
Let $\Omega\subset\mathbb{R}^d$ be a bounded Lipschitz domain with outward unit normal $n$ on $\partial\Omega$, where $1\le d\le 3$. We work on the Gelfand triple $\mathcal V\hookrightarrow \mathcal H \hookrightarrow \mathcal V^\ast$ with $\mathcal H:=L^2(\Omega)$ (inner product $(\cdot,\cdot)$ and norm $\|\cdot\|$), $\mathcal V:=H^1(\Omega)$ (norm $\|\cdot\|_{1,2}$), and $\mathcal V^\ast$ the dual of $\mathcal V$; $\langle \cdot,\cdot\rangle$ denotes the $\mathcal V^\ast$–$\mathcal V$ duality. We use $H^k(\Omega)$-norms $\|\cdot\|_{k,2}$ and $L^p(\Omega)$-norms $\|\cdot\|_{p}$, write $\nabla$ and $\Delta$ for the gradient and Laplacian, and adopt the embedding $H^1(\Omega)\hookrightarrow L^p(\Omega)$ for $2\le p\le 6$ (by Rellich–Kondrachov for $d\le 3$), which will be used repeatedly without explicit reference.

The order parameter is a real scalar field $\sigma:\Omega\to\mathbb{R}$. The free energy is given by the Landau–Ginzburg functional
\begin{equation}\label{eq:F-functional}
\mathcal F[\sigma]
=\int_\Omega\Big(\frac{\kappa}{2}|\nabla\sigma(x)|^2+V(\sigma(x))\Big)\,dx,
\qquad
V(s)=a_2 s^2+a_4 s^4+a_6 s^6-h s,
\end{equation}
where $\kappa>0$ is the gradient penalty and $(a_2,a_4,a_6,h)\in\mathbb{R}^4$ with $a_6>0$. Physical control parameters (e.g.\ reduced temperature or chemical potential) are collected into $u\in\mathbb{R}^m$; we allow $a_2=a_2(u)$ to vary smoothly with $u$, while $a_4,a_6,h,\kappa$ are kept fixed unless stated otherwise. Throughout, measurability is tacit: $\sigma\in\mathcal V$ means $\sigma\in H^1(\Omega)$ almost everywhere, and compositions $V(\sigma)$, $V'(\sigma)$ are interpreted pointwise a.e. with the usual polynomial-growth control.

The admissible configurations depend on boundary conditions. We fix one of the following: Dirichlet $\sigma|_{\partial\Omega}=\bar\sigma\in H^{1/2}(\partial\Omega)$ (then the admissible space is the affine set $\mathcal V_{\bar{\sigma}}:=\{\sigma\in H^1(\Omega):\sigma|_{\partial\Omega}=\bar\sigma\}$); homogeneous Neumann $\partial_n\sigma=0$ on $\partial\Omega$ (admissible space $\mathcal V=H^1(\Omega)$, with mean unconstrained); or periodic boundary conditions when $\Omega$ is a flat torus $\mathbb{T}^d$ (again $\mathcal V=H^1(\Omega)$ with periodic traces). The trace map $\mathcal V\to H^{1/2}(\partial\Omega)$ is continuous; when needed, we fix one choice and omit mentioning it further.

\begin{assumption}[coercivity, growth, and regularity]\label{assump:coercive}
The potential $V\in C^3(\mathbb{R})$ satisfies $a_6>0$ and there exist constants $c_1,c_2,C>0$ such that for all $s\in\mathbb{R}$
\[
V(s)\ge c_1|s|^6-c_2,\qquad |V'(s)|\le C(1+|s|^5),\qquad |V''(s)|\le C(1+|s|^4).
\]
\end{assumption}

\begin{assumption}[parameter dependence]\label{assump:param}
There is an open parameter set $\mathcal U\subset\mathbb{R}^m$ and a $C^1$ map $a_2:\mathcal U\to\mathbb{R}$; unless otherwise stated, $(a_4,a_6,h,\kappa)$ are independent of $u\in\mathcal U$ and remain bounded away from singular limits.
\end{assumption}

\begin{assumption}[dimension and embeddings]\label{assump:dim}
We assume $1\le d\le 3$ so that $H^1(\Omega)\hookrightarrow L^6(\Omega)$ compactly and the polynomial nonlinearities in \eqref{eq:F-functional} define well-posed Nemytskii operators on $\mathcal V$ and $\mathcal H$.
\end{assumption}

\begin{assumption}[noise model, deferred]\label{assump:noise}
When stochastic perturbations are considered later, the driving process is an $\mathcal H$-valued $Q$-Wiener process with $\operatorname{Tr}(Q)<\infty$ (or a spatially regularized cylindrical Wiener process), ensuring that the stochastic convolution lives in $\mathcal H$ and the Itô formulation is meaningful. This assumption is inactive in deterministic sections.
\end{assumption}

The first variation of $\mathcal F$ is defined in the usual Gâteaux sense. For $\sigma,\varphi\in\mathcal V$ one has
\begin{equation}\label{eq:first-variation}
\delta\mathcal F[\sigma](\varphi)
=\int_\Omega\Big(\kappa\,\nabla\sigma\cdot\nabla\varphi+V'(\sigma)\,\varphi\Big)\,dx
=\langle \mathcal E(\sigma),\varphi\rangle,
\end{equation}
which identifies the Euler–Lagrange operator $\mathcal E:\mathcal V\to\mathcal V^\ast$ in weak form by
\begin{equation}\label{eq:EL-operator}
\mathcal E(\sigma)=-\kappa\,\Delta\sigma+V'(\sigma)\quad\text{in }\mathcal V^\ast,
\end{equation}
with the boundary condition understood in the sense of traces. A (weak) critical point of $\mathcal F$ is any $\sigma^\star\in\mathcal V$ such that $\mathcal E(\sigma^\star)=0$ in $\mathcal V^\ast$; when needed, we specify the boundary condition together with the space. The second variation (Hessian) at a critical point $\sigma^\star$ is the self-adjoint operator $\mathcal L_{\sigma^\star}:\mathcal V\to\mathcal V^\ast$ given by
\begin{equation}\label{eq:Hessian}
\mathcal L_{\sigma^\star}\varphi=-\kappa\,\Delta\varphi + V''(\sigma^\star)\,\varphi,\qquad \varphi\in\mathcal V,
\end{equation}
and its realization on $\mathcal H$ (with the corresponding boundary condition) defines a self-adjoint unbounded operator with compact resolvent; hence the spectrum is purely discrete with finite-multiplicity eigenvalues accumulating only at $+\infty$. The Morse index $\mathrm{ind}(\sigma^\star)$ is defined as the total multiplicity of negative eigenvalues of $\mathcal L_{\sigma^\star}$; local minimizers have $\mathrm{ind}=0$, while mountain-pass type critical nuclei have $\mathrm{ind}=1$.

Two gradient-flow metrics associated with $\mathcal F$ will be used. The $L^2$-gradient flow (a.k.a.\ Model A) reads
\begin{equation}\label{eq:L2-flow}
\partial_t\sigma = -\Gamma\,\frac{\delta\mathcal F}{\delta\sigma}
=\Gamma\big(\kappa\,\Delta\sigma - V'(\sigma)\big)\quad\text{in }\mathcal V^\ast,
\end{equation}
with $\Gamma>0$ the mobility. The $H^{-1}$-gradient flow (mass-conserving Cahn–Hilliard-type dynamics) is written in divergence form
\begin{equation}\label{eq:H-1-flow}
\partial_t\sigma = \nabla\!\cdot\!\big(M(\sigma)\,\nabla\mu\big),\qquad \mu:=\frac{\delta\mathcal F}{\delta\sigma}=\,-\kappa\,\Delta\sigma+V'(\sigma),
\end{equation}
where the mobility $M(\sigma)\ge M_0>0$ is continuous (constant $M\equiv1$ is sufficient for our purposes). Under either choice and the fixed boundary condition, smooth solutions dissipate energy in the sense
\begin{equation}\label{eq:ed-L2}
\frac{d}{dt}\,\mathcal F[\sigma(t)]
= -\,\Gamma\,\|\kappa\Delta\sigma - V'(\sigma)\|_{\mathcal H}^{2} \le 0.
\end{equation}
\noindent\textit{(for \eqref{eq:L2-flow})}
\begin{equation}\label{eq:ed-H-1}
\frac{d}{dt}\,\mathcal F[\sigma(t)]
= -\!\int_\Omega M(\sigma)\,|\nabla\mu|^{2}\,dx \le 0.
\end{equation}
\noindent\textit{(for \eqref{eq:H-1-flow})}

which will be recovered rigorously in the weak setting by approximation. In particular, any $\omega$-limit point of a bounded trajectory is a weak solution of $\mathcal E(\sigma)=0$.

For completeness and for connecting to underdamped descriptions, we record the formal Hamiltonian counterpart. Introducing the conjugate momentum $\pi:=\partial_t\sigma$ and the energy
\begin{equation}\label{eq:Hamiltonian}
H[\sigma,\pi]=\int_\Omega\Big(\tfrac12\pi^2+\tfrac{\kappa}{2}|\nabla\sigma|^2+V(\sigma)\Big)\,dx,
\end{equation}
the canonical evolution is
\begin{equation}\label{eq:canonical-system}
\dot\sigma=\frac{\delta H}{\delta\pi}=\pi,\qquad
\dot\pi=-\frac{\delta H}{\delta\sigma}=\kappa\,\Delta\sigma - V'(\sigma),
\end{equation}
to which one may add linear friction $-\gamma\pi$ (and possibly noise) so that, in the overdamped limit $\gamma\to\infty$ with a parabolic time rescaling, \eqref{eq:canonical-system} reduces to \eqref{eq:L2-flow}. This establishes the Landau–Hamilton correspondence at a formal level and underlies later discussions of metastable transitions via variational and large-deviation principles.

\begin{definition}[weak solution and stationary solution]\label{def:weak}
Given $\sigma_0\in\mathcal H$, a weak solution to \eqref{eq:L2-flow} is a function $\sigma\in L^2_{\mathrm{loc}}(0,\infty;\mathcal V)\cap C([0,\infty);\mathcal H)$ such that for all $\varphi\in\mathcal V$ and a.e.\ $t>0$,
\[
\frac{d}{dt}(\sigma(t),\varphi)
= -\Gamma\,\langle \mathcal E(\sigma(t)),\varphi\rangle,
\qquad \sigma(0)=\sigma_0,
\]
with the boundary condition imposed in the trace sense. A stationary solution is any $\sigma^\star\in\mathcal V$ verifying $\mathcal E(\sigma^\star)=0$ in $\mathcal V^\ast$.
\end{definition}

To avoid pathologies and to enable compactness, we collect all structural hypotheses used later as follows: Assumptions \ref{assump:coercive}–\ref{assump:dim} ensure that $\mathcal F:\mathcal V\to\mathbb{R}$ is well-defined, coercive modulo the boundary condition, and sequentially weakly lower semicontinuous; the Nemytskii mappings $\sigma\mapsto V'(\sigma),V''(\sigma)$ are locally Lipschitz from $\mathcal V$ to $\mathcal H$ on bounded sets; and Palais–Smale-type compactness holds on energy sublevels, aided by Rellich compact embeddings. When stochastic forcing is present, Assumption \ref{assump:noise} guarantees that the stochastic convolution is $\mathcal H$-valued, and standard variational SPDE frameworks (e.g.\ monotone operator theory) apply. Finally, parameter dependence as in Assumption \ref{assump:param} allows continuation in $u\in\mathcal U$; in particular, as $u$ varies across regimes where $V$ exhibits single-well or double-well structure, the set of critical points of $\mathcal F$ undergoes fold/cusp reorganizations in the sense of Morse–catastrophe theory, and the Morse index defined through \eqref{eq:Hessian} tracks the creation/annihilation of minimizers and saddles. These properties will be used to establish well-posedness, long-time behavior, existence of mountain-pass critical nuclei with $\mathrm{ind}=1$, and, in the stochastic setting, the variational cost and Eyring–Kramers asymptotics for metastable transitions.
\section{Deterministic dynamics}\label{sec:deterministic}
We develop a complete analytical framework for the $L^2$–gradient flow generated by the free–energy functional $\mathcal F$ introduced in Section~\ref{sec:setting}. The dynamics evolves on the Gelfand triple $\mathcal V\hookrightarrow\mathcal H\hookrightarrow\mathcal V^\ast$ and is given in weak form by
\begin{equation}\label{eq:GF-main}
\partial_t\sigma \;=\; -\,\Gamma\,\frac{\delta\mathcal F}{\delta\sigma}
\;=\; \Gamma\big(\kappa\Delta\sigma - V'(\sigma)\big),
\qquad \sigma(0)=\sigma_0\in\mathcal H,
\end{equation}
with fixed $\Gamma>0$. All operators are understood in the trace/weak sense compatible with the boundary condition stated below.

\begin{assumption}[Boundary condition and functional setting]\label{assump:BC-det}
Exactly one admissible boundary condition is fixed throughout: either Dirichlet with prescribed trace $\bar\sigma\in H^{1/2}(\partial\Omega)$ (so that $\mathcal V$ is the affine subspace $\mathcal V_{\bar\sigma}:=\{\sigma\in H^1(\Omega):\sigma|_{\partial\Omega}=\bar\sigma\}$), or homogeneous Neumann $\partial_n\sigma=0$, or periodic on a flat torus. Traces and normal derivatives are taken in the usual Sobolev sense. The structural hypotheses (coercivity and polynomial growth of $V$) from Section~\ref{sec:setting} are in force.
\end{assumption}

A function $\sigma\in L^2_{\mathrm{loc}}(0,\infty;\mathcal V)\cap C([0,\infty);\mathcal H)$ with $\partial_t\sigma\in L^2_{\mathrm{loc}}(0,\infty;\mathcal V^\ast)$ is called a weak solution of \eqref{eq:GF-main} if for a.e.\ $t>0$ and all $\varphi\in\mathcal V$,
\begin{equation}\label{eq:weak-form-det}
\langle \partial_t\sigma(t),\varphi\rangle
= -\,\Gamma\!\int_\Omega\!\big(\kappa\nabla\sigma(t)\!\cdot\!\nabla\varphi + V'(\sigma(t))\,\varphi\big)\,dx,
\qquad \sigma(0)=\sigma_0,
\end{equation}
with the boundary condition imposed in the trace sense. Testing \eqref{eq:weak-form-det} formally with $\varphi=\kappa\Delta\sigma - V'(\sigma)$ yields the energy dissipation identity; at the weak level one obtains the following inequality.

\begin{lemma}[Energy dissipation]\label{lem:energy}
Every weak solution satisfies, for all $t\ge0$,
\begin{equation}\label{eq:energy-law}
\mathcal F[\sigma(t)]
+ \Gamma\!\int_0^t\!\!\|\kappa\Delta\sigma(s)-V'(\sigma(s))\|_{\mathcal H}^2\,ds
\;\le\; \mathcal F[\sigma_0].
\end{equation}
In particular $t\mapsto \mathcal F[\sigma(t)]$ is nonincreasing and admits a finite limit $\mathcal F_\infty$ as $t\to\infty$.
\end{lemma}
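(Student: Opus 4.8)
The plan is to establish \eqref{eq:energy-law} through a Faedo--Galerkin regularization, since the naive derivation---testing \eqref{eq:weak-form-det} with $\varphi=\kappa\Delta\sigma-V'(\sigma)=\tfrac{1}{\Gamma}\partial_t\sigma$ and invoking $\tfrac{d}{dt}\mathcal F[\sigma(t)]=\langle\mathcal E(\sigma(t)),\partial_t\sigma(t)\rangle$---is not admissible at the regularity of a weak solution, where one only has $\partial_t\sigma\in L^2_{\mathrm{loc}}(0,\infty;\mathcal V^\ast)$, so $\partial_t\sigma$ is not an allowed test function and $\|\mathcal E(\sigma(s))\|_{\mathcal H}$ need not even be finite a priori. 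Observe first that, since $H^1(\Omega)\hookrightarrow L^6(\Omega)$, the inequality \eqref{eq:energy-law} is vacuous unless $\mathcal F[\sigma_0]<\infty$, i.e.\ unless $\sigma_0\in\mathcal V$, which we therefore assume. I would then fix the $L^2(\Omega)$-orthonormal eigenbasis $\{e_k\}_{k\ge1}$ of $-\Delta$ subject to the boundary condition of Assumption~\ref{assump:BC-det} (in the Dirichlet case one works with $\sigma-\ell$ for a fixed time-independent $H^1$-lift $\ell$ of $\bar\sigma$), set $V_n:=\operatorname{span}\{e_1,\dots,e_n\}$ with $L^2$-projection $P_n$, and solve the finite-dimensional system $\partial_t\sigma_n=-\Gamma P_n\mathcal E(\sigma_n)$, $\sigma_n(0)=P_n\sigma_0$; the Cauchy--Lipschitz theorem gives a local solution, extended globally by the bound below.

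Since $\partial_t\sigma_n(t)\in V_n\subset\mathcal V$ is an admissible test function and $t\mapsto\sigma_n(t)$ is $C^1$ into the finite-dimensional space $V_n$ on which $\mathcal F$ is smooth, testing the Galerkin equation with $\partial_t\sigma_n$ gives the \emph{exact} identity $\|\partial_t\sigma_n\|_{\mathcal H}^2=-\Gamma\langle\mathcal E(\sigma_n),\partial_t\sigma_n\rangle=-\Gamma\tfrac{d}{dt}\mathcal F[\sigma_n]$. Rewriting $\tfrac1\Gamma\|\partial_t\sigma_n\|_{\mathcal H}^2=\Gamma\|P_n\mathcal E(\sigma_n)\|_{\mathcal H}^2$ via the equation and integrating in time yields
\[
\mathcal F[\sigma_n(t)]+\Gamma\!\int_0^t\!\|P_n\mathcal E(\sigma_n(s))\|_{\mathcal H}^2\,ds=\mathcal F[P_n\sigma_0],\qquad t\ge0.
\]
Because $P_n\sigma_0\to\sigma_0$ in $\mathcal V$ and $\mathcal F$ is continuous on $\mathcal V$, the right-hand side is bounded uniformly in $n$; coercivity (Assumption~\ref{assump:coercive}: $V(s)\ge c_1|s|^6-c_2$, plus a Young estimate for the linear term) then bounds $\{\sigma_n\}$ in $L^\infty(0,T;\mathcal V)$ and $\{\partial_t\sigma_n\}$ in $L^2(0,T;\mathcal H)$, hence $\{P_n\mathcal E(\sigma_n)\}$ in $L^2(0,T;\mathcal H)$, for every $T>0$.

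Next I would pass to the limit along a subsequence. The Aubin--Lions--Simon lemma with $\mathcal V\hookrightarrow\hookrightarrow L^q(\Omega)\hookrightarrow\mathcal V^\ast$ for $2\le q<6$ gives $\sigma_n\to\sigma$ in $C([0,T];L^q(\Omega))$ and, after a further extraction, a.e.\ on $\Omega\times(0,T)$; combined with the uniform $L^\infty(0,T;L^6)$ bound and $|V'(s)|\le C(1+|s|^5)$ this yields $V'(\sigma_n)\rightharpoonup V'(\sigma)$ in $L^2(0,T;L^{6/5}(\Omega))$, which identifies $\sigma$ as a weak solution of \eqref{eq:GF-main} with $\partial_t\sigma=-\Gamma\mathcal E(\sigma)\in L^2(0,T;\mathcal H)$. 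For each fixed $t$ one has $\sigma_n(t)\rightharpoonup\sigma(t)$ in $\mathcal V$, so weak lower semicontinuity of $\mathcal F$ on $\mathcal V$ (the gradient term is convex and continuous; for $\int V(\sigma)$, $H^1$-weak convergence forces strong $L^q$-convergence with $q<6$, hence a.e.\ convergence along a subsequence, and Fatou applies since $V\ge -c_2$) gives $\mathcal F[\sigma(t)]\le\liminf_n\mathcal F[\sigma_n(t)]$, while weak lower semicontinuity of the norm of $L^2(0,t;\mathcal H)$ gives $\Gamma\int_0^t\|\mathcal E(\sigma(s))\|_{\mathcal H}^2\,ds\le\liminf_n\Gamma\int_0^t\|P_n\mathcal E(\sigma_n(s))\|_{\mathcal H}^2\,ds$. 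Adding these, using the Galerkin identity, and passing $\mathcal F[P_n\sigma_0]\to\mathcal F[\sigma_0]$ yields \eqref{eq:energy-law} for the constructed solution; uniqueness of weak solutions (established later in this section via monotonicity and a Gronwall estimate) then delivers it for every weak solution. Localizing \eqref{eq:energy-law} to subintervals $[s,t]$ through uniqueness and the semigroup property shows $t\mapsto\mathcal F[\sigma(t)]$ is nonincreasing, and the lower bound $\mathcal F\ge -c_2|\Omega|$ from Assumption~\ref{assump:coercive} forces the monotone limit $\mathcal F_\infty$ to be finite.

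I expect the main obstacle to be the limit passage in the sextic nonlinearity at the borderline Sobolev exponent: for $d=3$ the embedding $H^1\hookrightarrow L^6$ fails to be compact, so $V(\sigma_n)$ in the energy and $V'(\sigma_n)$ in the equation must be controlled through the subcritical compactness $H^1\hookrightarrow\hookrightarrow L^q$ ($q<6$), a.e.\ convergence, and the uniform $L^6$ bound, rather than by any single strong-convergence statement. A secondary point is that this method yields only the \emph{inequality} in \eqref{eq:energy-law}, not an identity: $P_n\mathcal E(\sigma_n)$ converges merely weakly in $L^2(0,t;\mathcal H)$, so only lower semicontinuity of the quadratic dissipation is available; upgrading to an identity would require either strong convergence of $\partial_t\sigma_n$ or the parabolic-smoothing regularity $\mathcal E(\sigma)\in L^2_{\mathrm{loc}}((0,\infty);\mathcal H)$ together with the Gelfand-triple chain rule---neither of which is needed for \eqref{eq:energy-law} as stated. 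The Dirichlet case adds only the bookkeeping of the fixed lift $\ell$, which affects none of the estimates.
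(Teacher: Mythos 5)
Your proposal is correct and follows essentially the same route the paper takes: the paper also recovers \eqref{eq:energy-law} by a Galerkin approximation on Laplace eigenfunctions, an exact finite-dimensional energy identity, uniform bounds, Aubin--Lions compactness, and lower semicontinuity of $\mathcal F$ and of the dissipation in the limit (see the remark preceding the lemma and the proof sketch of Theorem~\ref{thm:exist-det}). Your additional step of transferring the inequality from the constructed solution to every weak solution via the uniqueness result (Theorem~\ref{thm:uniq-det}, whose extra hypothesis on $V''$ is automatic for the sextic polynomial) is a sensible refinement of the same argument rather than a different approach.
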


\begin{theorem}[Global existence]\label{thm:exist-det}
Under Assumptions~\ref{assump:BC-det} and the structural hypotheses of Section~\ref{sec:setting}, for every initial datum $\sigma_0\in\mathcal H$ there exists a global weak solution $\sigma$ to \eqref{eq:GF-main}. Moreover,
\[
\sigma\in L^\infty(0,T;\mathcal H)\cap L^2(0,T;\mathcal V),\qquad
\partial_t\sigma\in L^2(0,T;\mathcal V^\ast)\qquad\forall T>0,
\]
and \eqref{eq:energy-law} holds.
\end{theorem}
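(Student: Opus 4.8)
The plan is to construct a global weak solution by a Faedo--Galerkin discretization, using the coercivity contributed by the sextic term $a_6 s^6$ to close the a priori estimates and then passing to the limit via the Aubin--Lions lemma. First I would reduce to homogeneous boundary data: in the Dirichlet case fix a lift $\Sigma\in H^1(\Omega)$ of $\bar\sigma$ and set $\sigma=\Sigma+w$, so that $w$ satisfies the same equation with homogeneous Dirichlet condition, an extra fixed source $\kappa\Delta\Sigma\in\mathcal V^\ast$, and the shifted nonlinearity $V'(\Sigma+\cdot)$, whose growth bounds from Assumption~\ref{assump:coercive} are unaffected since $\Sigma\in L^6(\Omega)$ is fixed; the Neumann and periodic cases need no change. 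Then I would take $\{e_j\}_{j\ge1}$ the $\mathcal H$--orthonormal eigenbasis of $-\kappa\Delta$ with the chosen boundary condition (simultaneously $\mathcal V$--orthogonal), $V_n=\operatorname{span}\{e_1,\dots,e_n\}$ with $\mathcal H$--projection $P_n$, and seek $\sigma_n(t)\in V_n$ with $\langle\partial_t\sigma_n,\varphi\rangle=-\Gamma\int_\Omega(\kappa\nabla\sigma_n\cdot\nabla\varphi+V'(\sigma_n)\varphi)\,dx$ for all $\varphi\in V_n$ and $\sigma_n(0)=P_n\sigma_0$. Since $V'$ is polynomial, this is a locally Lipschitz ODE system for the coefficients, so a unique local solution exists and becomes global once the a priori bound below is in force.

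The core estimates come from testing with $\varphi=\sigma_n$. Because $a_6>0$, a Young inequality gives $V'(s)s=6a_6s^6+4a_4s^4+2a_2s^2-hs\ge c_0|s|^6-C_0$, hence
\[
\tfrac12\frac{d}{dt}\|\sigma_n(t)\|^2+\Gamma\kappa\|\nabla\sigma_n(t)\|^2+\Gamma c_0\|\sigma_n(t)\|_{L^6(\Omega)}^6\le\Gamma C_0|\Omega|,
\]
so $\sigma_n$ is bounded, uniformly in $n$ and on each $[0,T]$, in $L^\infty(0,T;\mathcal H)\cap L^2(0,T;\mathcal V)\cap L^6((0,T)\times\Omega)$. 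For the time derivative, testing with $P_n\varphi$, $\varphi\in\mathcal V$, and using $|V'(\sigma_n)|\le C(1+|\sigma_n|^5)$ together with the duality embedding $L^{6/5}(\Omega)\hookrightarrow\mathcal V^\ast$ (valid for $d\le3$ since $\mathcal V\hookrightarrow L^6(\Omega)$), one gets $\|\partial_t\sigma_n(t)\|_{\mathcal V^\ast}\le C\big(\|\sigma_n(t)\|_{\mathcal V}+1+\|\sigma_n(t)\|_{L^6(\Omega)}^5\big)$, whence $\partial_t\sigma_n$ is bounded in $L^{6/5}(0,T;\mathcal V^\ast)$; when $\mathcal F[\sigma_0]<\infty$ one instead tests with $\partial_t\sigma_n$ to obtain the discrete energy identity $\mathcal F[\sigma_n(t)]+\Gamma^{-1}\int_0^t\|\partial_t\sigma_n\|^2\,ds=\mathcal F[P_n\sigma_0]$, giving $\partial_t\sigma_n\in L^2(0,T;\mathcal H)$ directly, and for rough data the same identity multiplied by $t$ gives $\sqrt t\,\partial_t\sigma_n\in L^2(0,T;\mathcal H)$ and instantaneous smoothing.

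Finally, since $\mathcal V$ compactly embeds in $\mathcal H\hookrightarrow\mathcal V^\ast$, the Aubin--Lions--Simon lemma extracts a subsequence with $\sigma_n\to\sigma$ strongly in $L^2(0,T;\mathcal H)$ and in $C([0,T];\mathcal V^\ast)$, $\sigma_n\rightharpoonup\sigma$ in $L^2(0,T;\mathcal V)$, weak-$\ast$ in $L^\infty(0,T;\mathcal H)$, and in $L^6((0,T)\times\Omega)$, $\partial_t\sigma_n\rightharpoonup\partial_t\sigma$ in $L^{6/5}(0,T;\mathcal V^\ast)$, and $\sigma_n\to\sigma$ a.e. The step I expect to be the main obstacle is the passage to the limit in the sextic nonlinearity: in $d=3$ it sits exactly at the Sobolev exponent and, crucially, it is \emph{not} monotone in the bistable regime, so one cannot merely invoke monotone--operator existence for the full Euler--Lagrange map. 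Here the uniform bound $\|V'(\sigma_n)\|_{L^{6/5}((0,T)\times\Omega)}\le C(1+\|\sigma_n\|_{L^6}^5)$ combined with a.e.\ convergence and the standard lemma on $L^p$--bounded sequences ($p=6/5>1$) yields $V'(\sigma_n)\rightharpoonup V'(\sigma)$ in $L^{6/5}((0,T)\times\Omega)\hookrightarrow L^{6/5}(0,T;\mathcal V^\ast)$, which is enough to pass the weak form to the limit after fixing $\varphi\in V_m$, letting $n\to\infty$, and using density of $\bigcup_m V_m$ in $\mathcal V$. The regularity $\sigma\in L^\infty(0,T;\mathcal H)\cap L^2(0,T;\mathcal V)$ is then immediate; $\partial_t\sigma\in L^2(0,T;\mathcal V^\ast)$ follows for finite--energy data from the dissipation identity of Lemma~\ref{lem:energy} (so that $\partial_t\sigma=-\Gamma\mathcal E(\sigma)\in L^2(0,\infty;\mathcal H)\hookrightarrow L^2_{\mathrm{loc}}(0,\infty;\mathcal V^\ast)$) and for general $\sigma_0\in\mathcal H$ on $(0,T]$ by the smoothing estimate together with a density argument; $\sigma\in C([0,\infty);\mathcal H)$ with $\sigma(0)=\sigma_0$ follows from the Lions--Magenes embedding; and \eqref{eq:energy-law} follows by passing to the limit in the discrete energy identity, using weak lower semicontinuity of $\mathcal F$ (valid because $a_6>0$ makes $\sigma\mapsto a_6\int|\sigma|^6$ convex while the lower--order terms pass strongly, $\sigma_n\to\sigma$ in $L^p$ for $p<6$) and of $\|\cdot\|_{\mathcal H}$.

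A cleaner alternative avoiding Galerkin, which I would also note, is to write $V'=g-\lambda_0\,\mathrm{id}$ with $g(s):=V'(s)+\lambda_0 s$ nondecreasing for $\lambda_0:=\max\{0,\tfrac{6a_4^2}{5a_6}-2a_2\}$ (so $V''+\lambda_0\ge0$): then \eqref{eq:GF-main} is the gradient flow of the convex, coercive functional $\Gamma\int_\Omega(\tfrac\kappa2|\nabla\sigma|^2+G(\sigma))\,dx$ perturbed by the globally $\mathcal H$--Lipschitz term $\Gamma\lambda_0\sigma$, and the Brezis theory of such flows delivers a unique global solution for every $\sigma_0\in\mathcal H$, with $\sqrt t\,\partial_t\sigma\in L^2(0,T;\mathcal H)$, instantaneous finiteness of $\mathcal F[\sigma(t)]$, and the energy inequality. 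Uniqueness (not asserted in the theorem) in either approach is the one--sided bound $\langle\mathcal E(\sigma_1)-\mathcal E(\sigma_2),\sigma_1-\sigma_2\rangle\ge-\lambda_0\|\sigma_1-\sigma_2\|^2$ followed by Gronwall.
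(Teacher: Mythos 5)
Your proposal follows essentially the same route as the paper's proof sketch---Faedo--Galerkin on Laplace eigenfunctions (with the standard lift of the Dirichlet trace), coercivity-driven a priori bounds from testing with $\sigma_n$, Aubin--Lions compactness, a.e.\ convergence plus the uniform $L^{6/5}$ bound on $V'(\sigma_n)$ to pass the sextic term to the limit, and weak lower semicontinuity for \eqref{eq:energy-law}---and it is correct; the Brezis semiconvex-gradient-flow alternative is a valid bonus. If anything, you are more careful than the paper on one point: testing with the approximate solution alone only bounds $\partial_t\sigma_n$ in $L^{6/5}(0,T;\mathcal V^\ast)$, so the $L^2(0,T;\mathcal V^\ast)$ bound asserted directly in the paper's sketch genuinely needs the finite-energy identity (testing with $\partial_t\sigma_n$) or the smoothing/density argument you describe for rough data $\sigma_0\in\mathcal H$.
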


\begin{proof}[Proof sketch]
Galerkin approximation on Laplace eigenfunctions consistent with the boundary condition yields a finite-dimensional ODE with locally Lipschitz right-hand side due to the polynomial growth of $V'$. Testing by the approximate solution provides the exact finite-dimensional energy identity and uniform bounds in $L^\infty(0,T;\mathcal H)\cap L^2(0,T;\mathcal V)$, as well as bounds for $\partial_t\sigma^N$ in $L^2(0,T;\mathcal V^\ast)$. Aubin–Lions compactness gives, up to a subsequence, $\sigma^N\to\sigma$ in $L^2(0,T;\mathcal H)$ and a.e.\ in $\Omega\times(0,T)$. The Nemytskii map $\sigma\mapsto V'(\sigma)$ is continuous on bounded sets with polynomial growth, which allows passage to the limit in the nonlinear term. Lower semicontinuity of $\mathcal F$ yields \eqref{eq:energy-law} in the limit.
\end{proof}

\begin{theorem}[Uniqueness and continuous dependence]\label{thm:uniq-det}
Assume in addition that $V''$ is locally Lipschitz with polynomial growth. If $\sigma_1,\sigma_2$ are weak solutions with initial data $\sigma_{0,1},\sigma_{0,2}\in\mathcal H$, then for a.e.\ $t\ge0$,
\begin{equation}\label{eq:contdep}
\|\sigma_1(t)-\sigma_2(t)\|^2
\le \exp\!\Big(C\!\int_0^t\!(1+\|\sigma_1(s)\|_6^4+\|\sigma_2(s)\|_6^4)\,ds\Big)\,\|\sigma_{0,1}-\sigma_{0,2}\|^2,
\end{equation}
where $C$ depends only on the constants in Section~\ref{sec:setting} and on $\Omega$. In particular, weak solutions are unique in the class of Theorem~\ref{thm:exist-det}.
\end{theorem}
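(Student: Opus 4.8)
The plan is a Grönwall estimate on the difference $w:=\sigma_1-\sigma_2$ that exploits the favourable sign of the leading sextic term in $V$. First I would note that $w$ lies in the homogeneous admissible space: for Dirichlet data the traces cancel, $w|_{\partial\Omega}=0$, while for Neumann or periodic conditions $\mathcal V$ is already linear; in every case $w\in L^2(0,T;\mathcal V)$ with $\partial_t w\in L^2(0,T;\mathcal V^\ast)$ by Theorem~\ref{thm:exist-det}, so $w\in C([0,T];\mathcal H)$ and $\langle\partial_t w,w\rangle=\tfrac12\frac{d}{dt}\|w\|^2$ holds in the sense of distributions on $(0,T)$ by the standard chain rule for the Gelfand triple. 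Subtracting the two weak formulations \eqref{eq:weak-form-det} and testing with $\varphi=w(t)$ gives, for a.e.\ $t$,
\[
\tfrac12\frac{d}{dt}\|w\|^2+\Gamma\kappa\|\nabla w\|^2
=-\,\Gamma\!\int_\Omega\!\big(V'(\sigma_1)-V'(\sigma_2)\big)\,w\,dx ,
\]
where the integral is absolutely convergent because $V'(\sigma_i)\in L^{6/5}(\Omega)$ (polynomial growth together with $\sigma_i\in L^6$ for $d\le3$) and $w\in L^6(\Omega)$.

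The crux is the pointwise algebra of the nonlinearity. The affine term $-h$ cancels in the difference, and I would use $\sigma_1^3-\sigma_2^3=w(\sigma_1^2+\sigma_1\sigma_2+\sigma_2^2)$ and $\sigma_1^5-\sigma_2^5=w\sum_{k=0}^{4}\sigma_1^k\sigma_2^{4-k}$. Since $a_6>0$ and $\sum_{k=0}^4\sigma_1^k\sigma_2^{4-k}=(\sigma_1^5-\sigma_2^5)/(\sigma_1-\sigma_2)\ge0$ for all real arguments, the quintic contribution $-6\Gamma a_6\int_\Omega(\sum_k\sigma_1^k\sigma_2^{4-k})\,w^2\,dx$ is $\le0$ and is simply discarded --- this is the structural reason the sextic case closes. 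The quadratic term contributes at most $2\Gamma|a_2|\,\|w\|^2$. For the cubic term I would bound $|\sigma_1^2+\sigma_1\sigma_2+\sigma_2^2|\le\tfrac32(\sigma_1^2+\sigma_2^2)$, apply Hölder ($L^3\times L^{3/2}$) to obtain $\le C(\|\sigma_1\|_6^2+\|\sigma_2\|_6^2)\|w\|_3^2$, interpolate $\|w\|_3^2\le\|w\|\,\|w\|_6$, and then use $H^1\hookrightarrow L^6$ together with Young's inequality to reach
\[
4\Gamma|a_4|\!\int_\Omega\!(\sigma_1^2+\sigma_1\sigma_2+\sigma_2^2)\,w^2\,dx
\;\le\;\tfrac{\Gamma\kappa}{2}\|\nabla w\|^2+C\big(1+\|\sigma_1\|_6^4+\|\sigma_2\|_6^4\big)\|w\|^2 .
\]
The decisive point is that the cubic difference carries only two powers of $\sigma_i$, so Young's inequality can pair $\|w\|_{L^2}$ against $\|w\|_{L^6}$; this is precisely what produces the weights $\|\sigma_i\|_6^4$ appearing in \eqref{eq:contdep}.

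Collecting the three contributions, absorbing $\tfrac{\Gamma\kappa}{2}\|\nabla w\|^2$ on the left, and dropping the remaining nonnegative gradient term leaves
\[
\frac{d}{dt}\|w(t)\|^2\;\le\;C\big(1+\|\sigma_1(t)\|_6^4+\|\sigma_2(t)\|_6^4\big)\,\|w(t)\|^2 ,
\]
and Grönwall's lemma yields \eqref{eq:contdep}; uniqueness is the special case $\sigma_{0,1}=\sigma_{0,2}$. The exponent is finite for solutions in the class of Theorem~\ref{thm:exist-det}: from $\sigma_i\in L^\infty(0,T;\mathcal H)\cap L^2(0,T;\mathcal V)$ and Gagliardo--Nirenberg one gets $\|\sigma_i(s)\|_6^4\le C\,(1+\|\nabla\sigma_i(s)\|^2)$ with $C$ depending on $\sup_s\|\sigma_i(s)\|$, which is integrable on $(0,t)$. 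I do not expect a genuine obstacle here; the only step requiring discipline is the sign bookkeeping in the nonlinearity --- one must not bound $|\sigma_1^5-\sigma_2^5|$ crudely by $C(|\sigma_1|^4+|\sigma_2|^4)|w|$, since the resulting term $C(\|\sigma_1\|_6^4+\|\sigma_2\|_6^4)\|\nabla w\|^2$ carries an uncontrolled time-dependent weight and cannot be absorbed, so the quintic term must be kept with its sign. Equivalently, one may observe that $a_6>0$ forces $V''\ge-\lambda$ on $\mathbb{R}$ for some $\lambda\ge0$, whence $-\int_\Omega(V'(\sigma_1)-V'(\sigma_2))w\,dx\le\lambda\|w\|^2$ and even the sharper uniform bound $\|w(t)\|^2\le e^{2\Gamma\lambda t}\|w(0)\|^2$ holds, which in turn implies \eqref{eq:contdep}. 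Finally, the added hypothesis that $V''$ be locally Lipschitz with polynomial growth is automatic for the sextic $V$ and is recorded only for the statement in this generality.
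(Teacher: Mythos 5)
Your argument is correct and reaches exactly \eqref{eq:contdep}, but it treats the nonlinearity differently from the paper, and the difference is substantive rather than cosmetic. The paper's proof subtracts the weak formulations, tests with $\delta=\sigma_1-\sigma_2$, and then invokes the mean-value representation $V'(\sigma_1)-V'(\sigma_2)=V''(\theta)\delta$ together with the two-sided bound $|V''(s)|\le C(1+|s|^4)$ and $H^1\hookrightarrow L^6$. Taken literally in $d=3$ this gives $\int(1+|\sigma_1|^4+|\sigma_2|^4)\delta^2\,dx\le C(1+\|\sigma_1\|_6^4+\|\sigma_2\|_6^4)\|\delta\|_6^2$, and since $L^6$ is Sobolev-critical there is no interpolation with $L^2$: one is left with a term $C(\|\sigma_1\|_6^4+\|\sigma_2\|_6^4)\|\nabla\delta\|^2$ that can neither be absorbed into $\Gamma\kappa\|\nabla\delta\|^2$ nor fed to Gr\"onwall --- precisely the pitfall you point out. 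Your route closes the estimate by keeping the sign of the quintic difference ($a_6>0$ and $(\sigma_1^5-\sigma_2^5)(\sigma_1-\sigma_2)\ge0$, so that term is discarded), so that only the cubic and linear pieces need estimating, and the cubic piece is exactly what produces the weight $\|\sigma_i\|_6^4$ via $L^3\times L^{3/2}$ H\"older, $\|w\|_3^2\le\|w\|\,\|w\|_6$, and Young with absorption of $\tfrac{\Gamma\kappa}{2}\|\nabla w\|^2$. Your closing observation that $a_6>0$ forces $V''\ge-\lambda$ on $\mathbb{R}$ is in effect a one-sided version of the paper's mean-value step and yields the cleaner bound $\|w(t)\|^2\le e^{2\Gamma\lambda t}\|w(0)\|^2$, which implies \eqref{eq:contdep} and gives uniqueness unconditionally; this is the most robust way to state the argument.

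One side remark in your write-up is wrong and should be deleted or repaired: the Gagliardo--Nirenberg claim $\|\sigma_i(s)\|_6^4\le C\,(1+\|\nabla\sigma_i(s)\|^2)$ fails for $d=2,3$. In $d=3$ the correct statement is $\|\sigma\|_6\le C\|\sigma\|_{1,2}$ with no room to trade against $\|\sigma\|_{L^2}$, so $\|\sigma\|_6^4\simeq\|\sigma\|_{1,2}^4$, and $\int_0^t\|\sigma_i\|_{1,2}^4\,ds$ is not controlled by membership in $L^\infty(0,T;\mathcal H)\cap L^2(0,T;\mathcal V)$ alone (in $d=2$ one gets the exponent $8/3$, again larger than $2$). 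This does not damage the estimate \eqref{eq:contdep} itself, which holds as stated (vacuously if the exponent is infinite), but it does mean that the uniqueness conclusion should be drawn from your semiconvexity bound $V''\ge-\lambda$ (constant Gr\"onwall rate) rather than from the claimed integrability of $\|\sigma_i\|_6^4$; with that substitution your proof is complete.
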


\begin{proof}[Proof sketch]
Subtract the weak formulations, test with the difference $\delta=\sigma_1-\sigma_2$, and use the mean-value representation $V'(\sigma_1)-V'(\sigma_2)=V''(\theta)\delta$ together with the bound $|V''(s)|\le C(1+|s|^4)$ and the Sobolev embedding $H^1(\Omega)\hookrightarrow L^6(\Omega)$ (for $d\le3$). Grönwall’s inequality yields \eqref{eq:contdep}.
\end{proof}

\begin{proposition}[A priori bounds and precompactness]\label{prop:precompact}
Every weak solution satisfies
\[
\sup_{t\ge0}\|\sigma(t)\|_{1,2} < \infty,\qquad
\int_0^\infty\!\!\|\kappa\Delta\sigma(s)-V'(\sigma(s))\|_{\mathcal H}^2\,ds<\infty,\qquad
\partial_t\sigma\in L^2_{\mathrm{loc}}(0,\infty;\mathcal V^\ast).
\]
Consequently, for every sequence $t_n\to\infty$, the family $\{\sigma(\cdot+t_n)\}$ is relatively compact in $L^2(0,1;\mathcal H)$, and the trajectory $\{\sigma(t):t\ge0\}$ is relatively compact in $\mathcal H$.
\end{proposition}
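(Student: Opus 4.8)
The plan is to extract the first two bounds directly from the energy law of Lemma~\ref{lem:energy}, obtain the third from the weak formulation, and then deduce both compactness assertions from the compact Sobolev embedding $\mathcal V\hookrightarrow\mathcal H$ (valid for $d\le 3$ by Assumption~\ref{assump:dim}). The first step is to record that $\mathcal F$ is bounded below and coercive: from the pointwise estimate $V(s)\ge c_1|s|^6-c_2$ of Assumption~\ref{assump:coercive} one gets, for every $\sigma\in\mathcal V$,
\[
\mathcal F[\sigma]\ \ge\ \tfrac{\kappa}{2}\|\nabla\sigma\|^2+c_1\|\sigma\|_6^6-c_2|\Omega|\ \ge\ -\,c_2|\Omega|.
\]

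Next I would combine this with monotonicity. By Lemma~\ref{lem:energy}, $t\mapsto\mathcal F[\sigma(t)]$ is nonincreasing with finite limit $\mathcal F_\infty$, so $-c_2|\Omega|\le\mathcal F[\sigma(t)]\le\mathcal F[\sigma_0]$ for all $t\ge 0$; feeding the coercivity estimate back in yields $\tfrac{\kappa}{2}\|\nabla\sigma(t)\|^2+c_1\|\sigma(t)\|_6^6\le \mathcal F[\sigma_0]+c_2|\Omega|$, and since $\Omega$ is bounded $\|\sigma(t)\|\le |\Omega|^{1/3}\|\sigma(t)\|_6$, whence $\sup_{t\ge0}\|\sigma(t)\|_{1,2}<\infty$. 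Rearranging \eqref{eq:energy-law} and letting $t\to\infty$ gives $\Gamma\int_0^\infty\|\kappa\Delta\sigma-V'(\sigma)\|_{\mathcal H}^2\,ds\le \mathcal F[\sigma_0]-\mathcal F_\infty<\infty$, the second claim. For the third, testing \eqref{eq:weak-form-det} with $\varphi\in\mathcal V$, $\|\varphi\|_{1,2}\le 1$, and using $|V'(s)|\le C(1+|s|^5)$ together with $\mathcal V\hookrightarrow L^6$ (so $L^{6/5}\hookrightarrow\mathcal V^\ast$), one obtains $\|\partial_t\sigma(t)\|_{\mathcal V^\ast}\le \Gamma\big(\kappa\|\nabla\sigma(t)\|+C\|V'(\sigma(t))\|_{6/5}\big)\le C(1+\|\sigma(t)\|_{1,2}^5)$, which is uniformly bounded by the first step; in particular $\partial_t\sigma\in L^\infty(0,\infty;\mathcal V^\ast)\subset L^2_{\mathrm{loc}}(0,\infty;\mathcal V^\ast)$ (this also re-derives the bound already present in Theorem~\ref{thm:exist-det}).

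The two compactness consequences then follow quickly. The trajectory $\{\sigma(t):t\ge0\}$ is a bounded subset of $\mathcal V$ by the first bound, hence relatively compact in $\mathcal H$ because the embedding $\mathcal V\hookrightarrow\mathcal H$ is compact. For a sequence $t_n\to\infty$ set $\sigma_n:=\sigma(\cdot+t_n)|_{(0,1)}$; the first bound makes $\{\sigma_n\}$ bounded in $L^2(0,1;\mathcal V)$ (indeed in $L^\infty(0,1;\mathcal V)$) and the $\mathcal V^\ast$-estimate makes $\{\partial_t\sigma_n\}$ bounded in $L^2(0,1;\mathcal V^\ast)$, so the Aubin--Lions--Simon lemma applied to the Gelfand triple with compact embedding $\mathcal V\hookrightarrow\mathcal H$ yields relative compactness of $\{\sigma_n\}$ in $L^2(0,1;\mathcal H)$.

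The only genuine subtlety — and the step I expect to need the most care — is that a weak solution only has $\sigma_0\in\mathcal H$, so $\mathcal F[\sigma_0]$ may be $+\infty$ and the uniform $H^1$ bound can degenerate as $t\downarrow 0$. This is dealt with by the standard parabolic smoothing for $L^2$-gradient flows: one shows $\sigma(t)\in\mathcal V$ with $\mathcal F[\sigma(t)]<\infty$ for a.e.\ (in fact every) $t>0$ — e.g.\ from the Galerkin estimate obtained by testing with $t\,\partial_t\sigma^N$, giving $t\,\mathcal F[\sigma^N(t)]\le\int_0^t\mathcal F[\sigma^N(s)]\,ds$ together with the $L^2(0,T;\mathcal V)\cap L^\infty(0,T;\mathcal H)$ bounds of Theorem~\ref{thm:exist-det} — and then runs the argument above from an arbitrarily small time $\delta>0$ with $\sigma(\delta)\in\mathcal V$. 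All the asymptotic ($t_n\to\infty$) statements are unaffected, and under the (natural) extra hypothesis $\mathcal F[\sigma_0]<\infty$ the caveat disappears entirely; a secondary technical point is to keep the estimates uniform across the admissible boundary conditions, which is automatic here since only the pointwise coercivity of $V$ and $\|\cdot\|_{L^2}\lesssim\|\cdot\|_{L^6}$ on bounded $\Omega$ are used.
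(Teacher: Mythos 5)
Your argument is correct and follows essentially the same route as the paper's proof: coercivity of $V$ together with the energy inequality \eqref{eq:energy-law} gives the uniform $H^1$ bound and the dissipation integral, the weak formulation \eqref{eq:weak-form-det} with the growth of $V'$ gives the $\mathcal V^\ast$ bound on $\partial_t\sigma$, and Aubin--Lions on unit time windows plus the compact embedding $\mathcal V\hookrightarrow\mathcal H$ yields both compactness claims. Your closing remark about $\sigma_0\in\mathcal H$ with possibly $\mathcal F[\sigma_0]=+\infty$ (handled by smoothing, or by assuming finite initial energy) addresses a point the paper's sketch leaves implicit and does not alter the approach.
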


\begin{proof}
The first two bounds follow from coercivity of $\mathcal F$ and \eqref{eq:energy-law}; the time-derivative bound is immediate from \eqref{eq:weak-form-det}. Compactness follows from Aubin–Lions on sliding windows $[T,T+1]$ and a diagonal extraction.
\end{proof}

\begin{theorem}[$\omega$–limit set and stationarity]\label{thm:omega-det}
Let $\omega(\sigma_0):=\{\eta\in\mathcal H:\ \exists\,t_n\to\infty\ \text{with}\ \sigma(t_n)\to\eta\ \text{in }\mathcal H\}$. Then $\omega(\sigma_0)$ is nonempty, compact, connected, and invariant. Moreover, every $\eta\in\omega(\sigma_0)$ is a stationary solution of the Euler–Lagrange equation $\mathcal E(\eta)=0$ with the chosen boundary condition, and $\mathcal F$ is constant on $\omega(\sigma_0)$, equal to $\mathcal F_\infty$.
\end{theorem}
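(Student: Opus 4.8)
The plan is to prove the five assertions in order---nonemptiness, compactness, connectedness, the LaSalle-type identification of $\omega(\sigma_0)$ with the set of stationary solutions, constancy of $\mathcal F$---and then read off invariance as a corollary; throughout, $S(\tau)$ denotes the solution semigroup, well defined and continuous on $\mathcal H$ by Theorems~\ref{thm:exist-det}--\ref{thm:uniq-det}. By Proposition~\ref{prop:precompact} the orbit is relatively compact in $\mathcal H$, so $\omega(\sigma_0)=\bigcap_{T\ge0}\overline{\{\sigma(t):t\ge T\}}$ is a nested intersection of nonempty compact sets, hence nonempty and compact, and a routine argument using relative compactness gives $\mathrm{dist}_{\mathcal H}(\sigma(t),\omega(\sigma_0))\to0$. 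For connectedness I invoke the standard fact for continuous, relatively compact trajectories: if $\omega(\sigma_0)=K_1\sqcup K_2$ with $K_1,K_2$ nonempty compact at positive distance, then since $\sigma(\cdot)\in C([0,\infty);\mathcal H)$ enters every neighborhood of $K_1$ and of $K_2$ for arbitrarily large times, continuity forces $\sigma$ to leave the union of small neighborhoods of $K_1$ and $K_2$ along a sequence $t_k\to\infty$, contradicting $\mathrm{dist}_{\mathcal H}(\sigma(t),\omega(\sigma_0))\to0$.

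For the main step I use the energy law \eqref{eq:energy-law} together with the lower bound $\mathcal F\ge-c_2|\Omega|$ implied by Assumption~\ref{assump:coercive}: these give $\int_0^\infty\|\mathcal E(\sigma(s))\|_{\mathcal H}^2\,ds<\infty$, hence $\int_t^{t+1}\|\mathcal E(\sigma(s))\|_{\mathcal H}^2\,ds\to0$. Fix $\eta\in\omega(\sigma_0)$ and $t_n\to\infty$ with $\sigma(t_n)\to\eta$ in $\mathcal H$, and set $\sigma_n(s):=\sigma(t_n+s)$, $s\in[0,1]$. Using $\sup_t\|\sigma(t)\|_{1,2}<\infty$ (so $\sigma_n$ is bounded in $L^\infty(0,1;\mathcal V)$), the uniform bound on $\partial_t\sigma$ in $L^2(0,1;\mathcal V^\ast)$, and the compact embedding $\mathcal V\hookrightarrow\hookrightarrow\mathcal H$, the Aubin--Lions--Simon lemma yields, along a subsequence, $\sigma_n\to\bar\sigma$ in $C([0,1];\mathcal H)$ and weakly in $L^2(0,1;\mathcal V)$, $\partial_t\sigma_n\rightharpoonup\partial_t\bar\sigma$ in $L^2(0,1;\mathcal V^\ast)$, and $\sigma_n\to\bar\sigma$ a.e.\ on $(0,1)\times\Omega$. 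Since $\sigma_n(0)=\sigma(t_n)\to\eta$ we get $\bar\sigma(0)=\eta$; passing to the limit in the weak form \eqref{eq:weak-form-det} with $\mathcal E(\sigma_n)\to0$ in $L^2(0,1;\mathcal H)$ forces $\partial_t\bar\sigma=0$, so $\bar\sigma\equiv\eta$. Finally, in $\mathcal E(\sigma_n)\to0$ the principal part obeys $-\kappa\Delta\sigma_n\rightharpoonup-\kappa\Delta\eta$ in $L^2(0,1;\mathcal V^\ast)$, while $|V'(s)|\le C(1+|s|^5)$ with a.e.\ convergence and the uniform $L^6$ bound give $V'(\sigma_n)\rightharpoonup V'(\eta)$ in $L^{6/5}((0,1)\times\Omega)\hookrightarrow L^2(0,1;\mathcal V^\ast)$ for $d\le3$; hence $\mathcal E(\eta)=0$, with the boundary condition inherited as a weak $\mathcal V$-limit. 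Thus every $\eta\in\omega(\sigma_0)$ is stationary.

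With the same rescaled sequence I upgrade to strong convergence in $L^2(0,1;\mathcal V)$: testing $\mathcal E(\sigma_n)$ against $\sigma_n-\eta$ and integrating over $[0,1]$, the left-hand side tends to $0$ (since $\mathcal E(\sigma_n)\to0$ and $\sigma_n\to\eta$ in $L^2(0,1;\mathcal H)$), the mixed gradient term $\kappa\int_0^1(\nabla\sigma_n,\nabla\eta)\,ds\to\kappa\int_0^1\|\nabla\eta\|^2\,ds$ by weak $\mathcal V$-convergence, and $\int_0^1(V'(\sigma_n),\sigma_n-\eta)\,ds\to0$ by Hölder with exponents just below $6/5$ and $6$ (using strong $L^p$-convergence for $p<6$ and the uniform $L^6$ bound); hence $\int_0^1\|\nabla\sigma_n\|^2\,ds\to\int_0^1\|\nabla\eta\|^2\,ds$, which with weak convergence yields $\sigma_n\to\eta$ strongly in $L^2(0,1;\mathcal V)$. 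By $\mathcal V\hookrightarrow L^6$ and generalized dominated convergence (dominating $|V(\sigma_n)|$ by $C(1+|\sigma_n|^6)$ with $\|\sigma_n\|_6^6$ convergent in $L^1(0,1)$), $\mathcal F[\sigma_n(\cdot)]\to\mathcal F[\eta]$ in $L^1(0,1)$; but $\mathcal F[\sigma_n(s)]=\mathcal F[\sigma(t_n+s)]\to\mathcal F_\infty$ pointwise and lies in $[\mathcal F_\infty,\mathcal F[\sigma_0]]$, so $\int_0^1\mathcal F[\sigma_n]\,ds\to\mathcal F_\infty$; comparison gives $\mathcal F[\eta]=\mathcal F_\infty$. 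Invariance is then immediate: for $\tau\ge0$, continuous dependence (Theorem~\ref{thm:uniq-det}, whose exponential factor is uniformly bounded on $[0,\tau]$ since the orbit is $H^1$- hence $L^6$-bounded) gives $\sigma(t_n+\tau)=S(\tau)\sigma(t_n)\to S(\tau)\eta$ with $t_n+\tau\to\infty$, so $S(\tau)\eta\in\omega(\sigma_0)$; as each $\eta$ is stationary, $S(\tau)\eta=\eta$, hence $S(\tau)\,\omega(\sigma_0)=\omega(\sigma_0)$, with the constant orbit through $\eta$ furnishing the complete orbit in $\omega(\sigma_0)$.

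The delicate point is the identification and constancy steps: because $V'$ is $L^6$-critical in $d=3$, $\sigma_n$ is only weakly precompact in $L^6$, so both the passage to the limit in $\mathcal E(\sigma_n)$ and---above all---the upgrade of $\mathcal F$ from lower semicontinuity to genuine continuity along the rescaled sequence require the sliding-window Aubin--Lions compactness combined with the monotone-operator-type test by $\sigma_n-\eta$ to recover strong $H^1$-convergence. The remaining assertions (nonemptiness, compactness, connectedness, invariance) are routine once these are in hand.
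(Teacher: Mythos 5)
Your overall strategy is the same LaSalle/energy-dissipation route the paper sketches (precompactness from Proposition~\ref{prop:precompact}, $\int_0^\infty\|\mathcal E(\sigma)\|_{\mathcal H}^2\,dt<\infty$, sliding-window Aubin--Lions, passage to the limit in \eqref{eq:weak-form-det}), and your treatment of nonemptiness, compactness, connectedness, stationarity and invariance is in fact more careful than the paper's ``proof idea'' (in particular you correctly work on the windows $[t_n,t_n+1]$ rather than assuming $\mathcal E(\sigma(t_n))\to0$ along the same sequence that defines $\eta$). However, there is a genuine gap exactly at the step you flag as delicate: the claim that $\int_0^1(V'(\sigma_n),\sigma_n-\eta)\,ds\to0$ ``by H\"older with exponents just below $6/5$ and $6$.'' In $d=3$ the growth $|V'(s)|\le C(1+|s|^5)$ is critical for $H^1\hookrightarrow L^6$: the uniform $H^1$ bound controls $V'(\sigma_n)$ only in $L^{6/5}$, whose conjugate exponent is exactly $6$, whereas Rellich plus interpolation gives strong convergence of $\sigma_n-\eta$ only in $L^p$ with $p<6$. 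Lowering the exponent on $V'(\sigma_n)$ below $6/5$ pushes the conjugate above $6$, where no bound on $\sigma_n$ is available, so no admissible H\"older pairing closes; nor can you dominate $|V'(\sigma_n)(\sigma_n-\eta)|$ by a convergent $L^1$ majorant, since that would presuppose the strong $L^6$ convergence you are trying to establish. Consequently the chain ``$\int_0^1\|\nabla\sigma_n\|^2\to\int_0^1\|\nabla\eta\|^2$ $\Rightarrow$ strong $L^2(0,1;\mathcal V)$ convergence $\Rightarrow$ $\mathcal F[\eta]=\mathcal F_\infty$'' is unjustified as written for $d=3$ (for $d\le2$ your argument is fine).

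The step is repairable by exploiting the sign of the leading term instead of trying to send the whole pairing to zero. Split $V'(s)=6a_6s^5+W'(s)$ with $|W'(s)|\le C(1+|s|^3)$ subcritical; the $W'$ contribution vanishes against the strong $C([0,1];\mathcal H)$ (or $L^p$, $p<6$) convergence of $\sigma_n-\eta$. For the quintic part write $\int\sigma_n^5(\sigma_n-\eta)=\int\sigma_n^6-\int\sigma_n^5\,\eta$ and use $\sigma_n^5\rightharpoonup\eta^5$ weakly in $L^{6/5}$ (a.e.\ convergence plus the uniform bound) tested against the fixed $\eta\in L^6$. Then your identity from testing $\mathcal E(\sigma_n)$ with $\sigma_n-\eta$ yields
\[
\kappa\!\int_0^1\!\|\nabla\sigma_n\|^2\,ds+6a_6\!\int_0^1\!\!\int_\Omega\sigma_n^6\,dx\,ds
\;\longrightarrow\;
\kappa\!\int_0^1\!\|\nabla\eta\|^2\,ds+6a_6\!\int_0^1\!\!\int_\Omega\eta^6\,dx\,ds,
\]
and since each term is separately weakly lower semicontinuous (Fatou for the sextic term), both must converge individually; norm convergence plus weak convergence in the uniformly convex spaces $H^1$ and $L^6$ then gives the strong convergence you need, after which your dominated-convergence argument for $\mathcal F[\sigma_n(\cdot)]\to\mathcal F[\eta]$ and the comparison with $\mathcal F_\infty$ go through unchanged. (Equivalently, one can invoke the monotonicity of $s\mapsto s^5$ to the same effect.) With this repair your proof is complete and supplies the details---notably the constancy of $\mathcal F$ on $\omega(\sigma_0)$---that the paper's sketch leaves implicit.
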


\begin{proof}[Proof idea]
Precompactness in $\mathcal H$ follows from Proposition~\ref{prop:precompact}. Using \eqref{eq:energy-law} one has $\int_0^\infty\|\kappa\Delta\sigma-V'(\sigma)\|^2<\infty$, hence there exists a sequence $t_n\to\infty$ with $\|\kappa\Delta\sigma(t_n)-V'(\sigma(t_n))\|\to0$ and $\partial_t\sigma(t_n)\to0$ in $\mathcal V^\ast$. Passing to the limit in \eqref{eq:weak-form-det} shows $\mathcal E(\eta)=0$. Connectedness and invariance are standard for gradient flows.
\end{proof}

To obtain convergence to a single equilibrium with a quantified rate we impose the analytic gradient framework.

\begin{assumption}[Łojasiewicz–Simon framework]\label{assump:LS-det}
The potential $V$ is real analytic on $\mathbb{R}$. For every stationary point $\eta$ of $\mathcal F$, the linearized operator $\mathcal L_\eta=-\kappa\Delta+V''(\eta)$ is self-adjoint on $\mathcal H$ with compact resolvent and at most finite-dimensional kernel, so that a Łojasiewicz–Simon gradient inequality holds in a neighborhood of $\eta$.
\end{assumption}

\begin{theorem}[Convergence and rates via Łojasiewicz–Simon]\label{thm:LS-det}
Under Assumptions~\ref{assump:BC-det} and \ref{assump:LS-det} there exist $\theta\in(0,\tfrac12]$, $C>0$, and $\delta>0$ such that, whenever $\|\sigma-\eta\|_{1,2}<\delta$,
\begin{equation}\label{eq:LS-ineq-det}
|\mathcal F[\sigma]-\mathcal F[\eta]|^{1-\theta}\le C\,\|\mathcal E(\sigma)\|_{\mathcal V^\ast}.
\end{equation}
If a trajectory enters this neighborhood at some time $t_0$, then $\sigma(t)\to\eta$ in $\mathcal H$ as $t\to\infty$, and
\[
\|\sigma(t)-\eta\|\;\le\;
\begin{cases}
C(1+t)^{-\frac{\theta}{1-2\theta}}, & 0<\theta<\tfrac12,\\[3pt]
C e^{-c t}, & \theta=\tfrac12,
\end{cases}
\]
for constants $C,c>0$ depending on the data and $\eta$.
\end{theorem}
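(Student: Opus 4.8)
The plan is to establish the static inequality \eqref{eq:LS-ineq-det} first and then extract convergence and rates from it by the standard energy--Łojasiewicz argument. For the inequality itself I would invoke the abstract Łojasiewicz--Simon theorem via a Lyapunov--Schmidt reduction. Under Assumption~\ref{assump:LS-det} the Euler--Lagrange map $\sigma\mapsto\mathcal E(\sigma)=-\kappa\Delta\sigma+V'(\sigma)$ is real analytic from a neighbourhood of $\eta$ in $\mathcal V$ (after the parabolic-smoothing upgrade of Section~\ref{sec:deterministic}, in $H^2\cap\mathcal V$) into $\mathcal V^\ast$: the linear part is bounded, and the superposition part inherits analyticity from the power-series expansion of $V$ together with the embeddings $H^1\hookrightarrow L^p$, $p\le 6$, available for $d\le 3$. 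Its Fréchet derivative at $\eta$ is the Hessian $\mathcal L_\eta$, self-adjoint with compact resolvent by Assumption~\ref{assump:LS-det}, hence Fredholm of index zero with finite-dimensional kernel $N:=\ker\mathcal L_\eta$. Writing $P$ for the $\mathcal H$-orthogonal projector onto $N$, the implicit function theorem applied to $(\mathrm{Id}-P)\mathcal E$ on $(\mathrm{Id}-P)\mathcal V$ produces an analytic slaving map $\psi:N\to(\mathrm{Id}-P)\mathcal V$ solving $(\mathrm{Id}-P)\mathcal E(\xi+\psi(\xi))=0$; the reduced functional $\gamma(\xi):=\mathcal F[\xi+\psi(\xi)]$ is then analytic on the finite-dimensional space $N$, so the classical Łojasiewicz inequality gives $|\gamma(\xi)-\gamma(0)|^{1-\theta}\le C\,|\nabla\gamma(\xi)|$ for some $\theta\in(0,\tfrac12]$. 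Transporting this back through the standard comparisons of $|\mathcal F[\sigma]-\mathcal F[\eta]|$ with $|\gamma(P\sigma)-\gamma(0)|$ and of $\|\mathcal E(\sigma)\|_{\mathcal V^\ast}$ with $|\nabla\gamma(P\sigma)|$, using invertibility of $\mathcal L_\eta$ on its range to control the slaved component, yields \eqref{eq:LS-ineq-det} on a possibly smaller ball.

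For convergence I would run the usual bookkeeping. Put $\Phi(t):=\mathcal F[\sigma(t)]-\mathcal F[\eta]\ge0$; by Lemma~\ref{lem:energy}, sharpened to an identity along the smooth solution, $\Phi$ is nonincreasing with $\Phi'(t)=-\Gamma\|\mathcal E(\sigma(t))\|_{\mathcal H}^2=-\Gamma^{-1}\|\partial_t\sigma(t)\|_{\mathcal H}^2$. A continuation argument shows that if the trajectory enters a small enough ball about $\eta$ at $t_0$ with $\Phi(t_0)$ small (which one may arrange along a subsequence realizing $\eta\in\omega(\sigma_0)$, using parabolic regularity to upgrade the $\mathcal H$-convergence to $\mathcal V$), it cannot leave $\{\|\sigma-\eta\|_{1,2}<\delta\}$: while \eqref{eq:LS-ineq-det} applies, and using $\|\mathcal E\|_{\mathcal V^\ast}\le C\|\mathcal E\|_{\mathcal H}$, one gets $-\tfrac{d}{dt}\Phi(t)^\theta=\theta\,\Phi(t)^{\theta-1}\Gamma\|\mathcal E(\sigma(t))\|_{\mathcal H}^2\ge c\,\|\partial_t\sigma(t)\|_{\mathcal H}$, hence $\int_{t_0}^{t}\|\partial_s\sigma\|_{\mathcal H}\,ds\le c^{-1}\Phi(t_0)^\theta$; combining this finite-length bound with an interpolation $\|\cdot\|_{1,2}\lesssim\|\cdot\|_{\mathcal H}^{1-\alpha}\|\cdot\|_{H^2}^{\alpha}$ and the bounded higher norm from Proposition~\ref{prop:precompact} closes the bootstrap. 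The finite length forces $\sigma(t)$ to be $\mathcal H$-Cauchy, so $\sigma(t)\to\sigma_\infty$ in $\mathcal H$; by Theorem~\ref{thm:omega-det}, $\sigma_\infty$ is stationary at level $\mathcal F_\infty=\mathcal F[\eta]$, and shrinking $\delta$ a posteriori gives $\sigma_\infty=\eta$ (in particular $\omega(\sigma_0)=\{\eta\}$ whenever $\eta\in\omega(\sigma_0)$, by finiteness of the length).

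For the rates, substituting \eqref{eq:LS-ineq-det} into the dissipation identity yields the autonomous differential inequality $\Phi'(t)\le -c\,\Phi(t)^{2(1-\theta)}$. When $\theta=\tfrac12$ this reads $\Phi'\le -c\Phi$, so $\Phi(t)\le\Phi(t_0)e^{-c(t-t_0)}$; when $\theta\in(0,\tfrac12)$ one has $2(1-\theta)>1$ and elementary ODE comparison gives $\Phi(t)\le C(1+t)^{-1/(1-2\theta)}$. Integrating $-\tfrac{d}{dt}\Phi^\theta\ge c\,\|\partial_t\sigma\|_{\mathcal H}$ from $t$ to $\infty$ gives $\|\sigma(t)-\eta\|_{\mathcal H}\le c^{-1}\Phi(t)^\theta$, and inserting the two decay laws for $\Phi$ produces the stated bounds $Ce^{-ct}$ and $C(1+t)^{-\theta/(1-2\theta)}$ respectively.

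I expect the genuine obstacle to be the static step rather than the dynamics: establishing bona fide real analyticity of the superposition operator $\sigma\mapsto V'(\sigma)$ between the right function spaces at the critical exponent $p=6$ in $d=3$ (which is precisely what forces the regularity upgrade to $H^2$), and then carrying the finite-dimensional Łojasiewicz exponent through the Lyapunov--Schmidt reduction with matching norms on the two sides of the estimate. A secondary technical point is the trajectory-trapping bootstrap, which must control the $\mathcal V$-norm along the flow through interpolation and parabolic regularity rather than through the $\mathcal H$-length bound alone.
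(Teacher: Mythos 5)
Your proposal is correct and follows essentially the same route the paper sketches: Simon's extension of the Łojasiewicz inequality obtained from analyticity of $\mathcal F$ and the Fredholm/spectral structure of $\mathcal L_\eta$ (via Lyapunov--Schmidt reduction), followed by the standard energy--Łojasiewicz bookkeeping ($-\frac{d}{dt}\Phi^\theta\gtrsim\|\partial_t\sigma\|_{\mathcal H}$, finite length, trapping) and ODE comparison for the rates. The only small nit is that the uniform higher-norm bound you invoke for the trapping step is not supplied by Proposition~\ref{prop:precompact} (which gives only the $H^1$ bound) but by parabolic smoothing, as you yourself partly acknowledge; also, analyticity of $\mathcal E:\mathcal V\to\mathcal V^\ast$ already holds at the critical exponent since $V'$ is a degree-five polynomial and $H^1\hookrightarrow L^6$, so the $H^2$ upgrade is a convenience rather than a necessity.
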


\begin{proof}[Proof idea]
Analyticity of $\mathcal F$ on $\mathcal V$ and the spectral properties of $\mathcal L_\eta$ imply \eqref{eq:LS-ineq-det} (Simon’s extension of the Łojasiewicz inequality). Combining with \eqref{eq:energy-law} gives an integrable differential inequality for $\mathcal F[\sigma(t)]-\mathcal F[\eta]$, from which the rates follow.
\end{proof}

The linearized dynamics near a strictly stable equilibrium admits an alternative quantitative description useful away from bifurcation points. Write $\sigma(t)=\eta+u(t)$ with $\mathcal L_\eta u + \mathcal N_\eta(u)=0$ where $\mathcal N_\eta(u)=V'(\eta+u)-V'(\eta)-V''(\eta)u$. When $\mathcal L_\eta\ge \lambda_\ast I$ with $\lambda_\ast>0$ (no kernel) and $\|u_0\|$ is sufficiently small, variation-of-constants and standard nonlinear semigroup estimates yield
\begin{equation}\label{eq:exp-stab}
\|u(t)\| \;\le\; C e^{-\Gamma\lambda_\ast t}\|u_0\|\qquad (t\ge0),
\end{equation}
which is consistent with Theorem~\ref{thm:LS-det} in the extremal case $\theta=\tfrac12$ and provides an explicit exponential rate in terms of the spectral gap of $\mathcal L_\eta$.

Regularity can be bootstrapped by elliptic estimates. In particular, for smooth $\partial\Omega$ and $V\in C^\infty$, if $\sigma(t)\in H^1$ solves \eqref{eq:GF-main} then $\kappa\Delta\sigma=V'(\sigma)-\Gamma^{-1}\partial_t\sigma\in L^2_{\mathrm{loc}}(0,\infty;L^2)$, whence $\sigma\in L^2_{\mathrm{loc}}(0,\infty;H^2)$ under Neumann or periodic boundary conditions (and in the Dirichlet case locally in the interior). For stationary solutions one obtains $\sigma^\ast\in H^2$ (and even real-analytic in the interior if $V$ is analytic). These regularity properties justify the spectral analysis of $\mathcal L_{\sigma^\ast}$ used in Section~\ref{sec:transition-spectrum}.

For completeness we record the mass-conserving variant (Model B) in the same setting. Let $\mu:=-\kappa\Delta\sigma+V'(\sigma)$ and consider
\begin{equation}\label{eq:H-1-flow}
\partial_t\sigma=\nabla\!\cdot(M(\sigma)\nabla\mu),\qquad M(\sigma)\ge M_0>0.
\end{equation}
Testing with $\mu$ gives the energy law
\[
\frac{d}{dt}\mathcal F[\sigma(t)] = -\int_\Omega M(\sigma(t))\,|\nabla\mu(t)|^2\,dx\le0,
\]
and the Galerkin strategy applies verbatim, now yielding $\sigma\in L^2_{\mathrm{loc}}(0,\infty;H^2(\Omega))$ (for Neumann/periodic BC) by elliptic regularity on $\mu$. Uniqueness and long-time behavior follow under the same monotonicity/analyticity hypotheses, while mass conservation $\int_\Omega \sigma(t)\,dx=\int_\Omega \sigma_0\,dx$ selects the appropriate invariant affine subspace of $\mathcal V$.

Finally, we emphasize the phenomenon of critical slowing down near continuous phase transitions. If along a parameter path the curvature at a local minimum degenerates, $V''(\sigma_\ast)\downarrow 0$, the linear relaxation time $\tau=(\Gamma V''(\sigma_\ast))^{-1}$ diverges and the constants in the Łojasiewicz–Simon inequality worsen, converting exponential convergence into algebraic decay. This analytic manifestation precisely matches the flattening of the energy landscape and will be reflected in the scaling of nucleation times derived in the stochastic theory of Section~\ref{sec:stochastic}. Altogether, the results above establish that \eqref{eq:GF-main} defines a globally well-posed, strictly dissipative, asymptotically compact dynamical system on $\mathcal H$, with trajectories converging to the stationary manifold of $\mathcal F$ and, under analyticity, to single equilibria with explicit convergence rates.
\section{Heteroclinic geometry and spectral data at the saddle}\label{sec:transition-spectrum}
We now analyze the local and semi–local geometry of the energy landscape near a mountain–pass critical nucleus $\sigma^\dagger$ and extract the spectral information that governs deterministic exit channels and, later on, the sharp prefactors in metastable transition laws. Throughout this section we keep the structural hypotheses of Sections~\ref{sec:setting}–\ref{sec:deterministic} and the existence of at least two distinct local minima (metastable/stable) ensured by the double–well character of $V$; the critical nucleus $\sigma^\dagger$ is the nontrivial stationary point produced in Section~\ref{sec:transition-spectrum} at the mountain–pass level $c_{\mathrm{mp}}$.

\begin{assumption}[Nondegenerate index–one saddle]\label{assump:saddle}
The critical nucleus $\sigma^\dagger\in\mathcal V$ is nondegenerate up to the invariances of the boundary value problem, in the sense that the Hessian
\[
\mathcal L_{\sigma^\dagger}:=-\kappa\Delta+V''(\sigma^\dagger)
\]
is self–adjoint on $\mathcal H$ with compact resolvent and has exactly one negative eigenvalue $\lambda_-<0$, with the rest of the spectrum contained in $[\lambda_+,\,\infty)$ for some $\lambda_+>0$. In translation–invariant settings (periodic domain or whole space) any residual kernel generated by infinitesimal symmetries is factored out (e.g.\ by working modulo translations or by pinning the center of mass).
\end{assumption}

Under Assumption~\ref{assump:saddle} the quadratic expansion of $\mathcal F$ near $\sigma^\dagger$ reads
\[
\mathcal F[\sigma^\dagger+\xi]
=\mathcal F[\sigma^\dagger]
+\tfrac12\langle \mathcal L_{\sigma^\dagger}\xi,\xi\rangle
+\mathcal R(\xi),\qquad
|\mathcal R(\xi)|\le C\|\xi\|_{1,2}^3
\]
on a neighborhood where $V^{(3)}$ is bounded. Let $e_-\in \mathcal H$ be the normalized eigenfunction associated with $\lambda_-$ and write the orthogonal decomposition $\xi=\alpha e_-+w$ with $w\perp e_-$. Then
\[
\mathcal F[\sigma^\dagger+\xi]
=\mathcal F[\sigma^\dagger]
+\tfrac12\lambda_- \alpha^2
+\tfrac12\langle \mathcal L_{\sigma^\dagger}w,w\rangle
+\mathcal R(\alpha e_-+w),
\]
so that the negative curvature is confined to the one–dimensional direction $e_-$, while all orthogonal directions are spectrally stable. This structure implies the existence of a $C^k$ ($k\ge2$) one–dimensional unstable manifold and a codimension–one stable manifold for the deterministic gradient flow.

\begin{theorem}[Local invariant manifold structure]\label{thm:inv-manifolds}
There exist neighborhoods $\mathcal U\subset\mathcal H$ of $\sigma^\dagger$, a one–dimensional $C^k$ unstable manifold $W^{\rm u}(\sigma^\dagger)\subset\mathcal U$ tangent to $\mathrm{span}\{e_-\}$ at $\sigma^\dagger$, and a codimension–one $C^k$ stable manifold $W^{\rm s}(\sigma^\dagger)\subset\mathcal U$ tangent to $e_-^\perp$ at $\sigma^\dagger$, such that the gradient flow of \eqref{eq:GF-main} restricted to $\mathcal U$ admits an exponential dichotomy:
\[
\operatorname{dist}(\sigma(t),\sigma^\dagger)
\le C e^{-\Gamma\lambda_+ t}\operatorname{dist}(\sigma(0),W^{\rm u}(\sigma^\dagger))\quad\text{for }t\ge0
\]
along stable fibers, while trajectories on $W^{\rm u}(\sigma^\dagger)$ satisfy $\|\sigma(t)-\sigma^\dagger\|\ge c e^{\Gamma|\lambda_-| t}\|\sigma(0)-\sigma^\dagger\|$ for $t\le0$.
\end{theorem}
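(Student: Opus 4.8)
The plan is to recast the flow near $\sigma^\dagger$ as an abstract semilinear parabolic equation whose linear part is sectorial with a spectral gap, and then to run a Lyapunov--Perron construction in a fractional power scale adapted to the sextic nonlinearity. Writing $\sigma=\sigma^\dagger+\xi$, \eqref{eq:GF-main} becomes $\partial_t\xi=A\xi+f(\xi)$ with $A:=-\Gamma\mathcal L_{\sigma^\dagger}$ and $f(\xi):=-\Gamma\bigl(V'(\sigma^\dagger+\xi)-V'(\sigma^\dagger)-V''(\sigma^\dagger)\,\xi\bigr)$. By Assumption~\ref{assump:saddle} and the regularity $\sigma^\dagger\in H^2\hookrightarrow L^\infty$ recorded in Section~\ref{sec:deterministic}, $A$ is self--adjoint on $\mathcal H$ and bounded above, hence sectorial and the generator of an analytic semigroup; its spectrum is $\{\Gamma|\lambda_-|\}\cup\{-\Gamma\mu_j:\mu_j\ge\lambda_+\}$, so the rank--one spectral projection $P_-$ onto $\mathrm{span}\{e_-\}$ and $P_+:=I-P_-$ split $\mathcal H=\mathcal H_-\oplus\mathcal H_+$ into $A$--invariant subspaces obeying $\|e^{tA}P_-\|\le Ce^{\Gamma|\lambda_-|t}$ for $t\le0$ and $\|e^{tA}P_+\|\le Ce^{-\Gamma\lambda_+ t}$ for $t\ge0$, with an extra smoothing factor $t^{-\alpha}$ in the $\mathcal H\to X^\alpha$ operator norm, where $X^\alpha:=D\bigl((\omega I-A)^\alpha\bigr)$ for fixed $\alpha\in(\tfrac35,1)$ and $\omega$ large. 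Since $d\le3$, elliptic regularity and Sobolev embedding give $X^\alpha\hookrightarrow L^{10}(\Omega)$, so the Taylor remainder $f$ is a polynomial (hence real--analytic) map $X^\alpha\to\mathcal H$ on bounded sets with $f(0)=0$ and $Df(0)=0$. Because the assertion is purely local, I would fix a small $\rho_0>0$ and replace $f$ by $\tilde f(\xi):=\chi(\|\xi\|_{X^\alpha}/\rho_0)f(\xi)$ with $\chi\in C_c^\infty$ equal to $1$ near $0$; then $\tilde f$ is globally $C^\infty$, agrees with $f$ on $\{\|\xi\|_{X^\alpha}\le\rho_0\}$, satisfies $\|\tilde f(\xi)\|\le L\|\xi\|_{X^\alpha}^2$ near $0$, and is globally Lipschitz on $X^\alpha$ with constant $L(\rho_0)\to0$ as $\rho_0\to0$.

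For the modified equation $\partial_t\xi=A\xi+\tilde f(\xi)$, the unstable manifold is the set of $\xi_0$ admitting a mild solution on $(-\infty,0]$ with $\sup_{t\le0}e^{-\beta t}\|\xi(t)\|_{X^\alpha}<\infty$ for a fixed rate $\beta\in(0,\Gamma|\lambda_-|)$ chosen moreover with $2\beta>\Gamma|\lambda_-|$ (the latter is used only for the sharp backward lower bound); equivalently $\xi$ solves the Lyapunov--Perron equation $\xi(t)=e^{tA}P_-\xi(0)+\int_0^t e^{(t-s)A}P_-\tilde f(\xi(s))\,ds+\int_{-\infty}^t e^{(t-s)A}P_+\tilde f(\xi(s))\,ds$. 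For each prescribed $a=P_-\xi(0)\in\mathcal H_-$ the right--hand side is a self--map of the Banach space $\Sigma_\beta^-:=\{\xi\in C((-\infty,0];X^\alpha):\sup_{t\le0}e^{-\beta t}\|\xi(t)\|_{X^\alpha}<\infty\}$; the dichotomy bounds give it Lipschitz constant $\lesssim L(\rho_0)\bigl((\Gamma|\lambda_-|-\beta)^{-1}+\int_0^\infty\tau^{-\alpha}e^{-\Gamma\lambda_+\tau}\,d\tau\bigr)$, which is $<1$ once $\rho_0$ is small, so there is a unique fixed point $\xi(\cdot;a)$ depending Lipschitz--continuously on $a$ with $\xi(\cdot;0)=0$. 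The uniform (parameter) contraction principle together with the $C^\infty$ dependence of $\tilde f$ makes $a\mapsto\xi(\cdot;a)$, hence $h^{\mathrm u}(a):=P_+\xi(0;a)=\int_{-\infty}^0 e^{-sA}P_+\tilde f(\xi(s;a))\,ds$, of class $C^k$ for every $k$; the quadratic bound on $\tilde f$ gives $\|h^{\mathrm u}(a)\|\lesssim L\|a\|^2$, so $h^{\mathrm u}(0)=0$ and $Dh^{\mathrm u}(0)=0$. Thus $W^{\mathrm u}(\sigma^\dagger):=\{\sigma^\dagger+a+h^{\mathrm u}(a):\|a\|<\rho_0\}$ is a one--dimensional $C^k$ manifold tangent to $\mathrm{span}\{e_-\}$. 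For the backward estimate one reads off the $P_-$--component: $P_-\xi(t;a)=e^{tA}a+\int_0^t e^{(t-s)A}P_-\tilde f(\xi(s;a))\,ds$, where $\|e^{tA}a\|=e^{\Gamma|\lambda_-|t}\|a\|$ on the line $\mathcal H_-$ while the correction is $O(L\|a\|^2 e^{\Gamma|\lambda_-|t})$ (using $2\beta>\Gamma|\lambda_-|$ to make the $s$--integral bounded as $t\to-\infty$); hence $\|\sigma(t)-\sigma^\dagger\|\ge\|P_-\xi(t;a)\|\ge c\,e^{\Gamma|\lambda_-|t}\|\sigma(0)-\sigma^\dagger\|$ for $t\le0$ and $\|a\|$ small.

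The stable manifold is obtained by the mirror construction with forward time: $\xi_0\in W^{\mathrm s}(\sigma^\dagger)$ iff the forward mild solution exists on $[0,\infty)$ with $\sup_{t\ge0}e^{\gamma t}\|\xi(t)\|_{X^\alpha}<\infty$ for fixed $\gamma\in(0,\Gamma\lambda_+)$, characterized by the companion integral equation on $\Sigma_\gamma^+$ with $P_+\xi(0)$ prescribed; its $C^k$--parametrized fixed point yields the graph $h^{\mathrm s}:\mathcal H_+\cap B_{\rho_0}\to\mathcal H_-$ of a codimension--one $C^k$ manifold tangent to $e_-^\perp$, and membership in $\Sigma_\gamma^+$ gives $\|\sigma(t)-\sigma^\dagger\|\le Ce^{-\gamma t}\|\sigma(0)-\sigma^\dagger\|$ on $W^{\mathrm s}$, with $\gamma$ taken arbitrarily close to $\Gamma\lambda_+$ (a routine bootstrap, exploiting that the forced $P_-$--component decays faster and the coupling is higher order, upgrades the decay to the principal rate $\Gamma\lambda_+$). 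To obtain the full dichotomy transverse to $W^{\mathrm u}$ one adds the invariant stable foliation: for each $p\in W^{\mathrm u}(\sigma^\dagger)$ the same Lyapunov--Perron scheme applied along the orbit of $p$ produces a strong--stable leaf $W^{\mathrm{ss}}(p)$, and $\{W^{\mathrm{ss}}(p)\}_p$ foliate a neighbourhood $\mathcal U$ of $\sigma^\dagger$ with $\operatorname{dist}(\sigma(t),W^{\mathrm u}(\sigma^\dagger))\le Ce^{-\Gamma\lambda_+ t}\operatorname{dist}(\sigma(0),W^{\mathrm u}(\sigma^\dagger))$ for $t\ge0$, which is the asserted estimate along stable fibres. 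Since the manifolds are produced inside $X^\alpha$, analytic smoothing of \eqref{eq:GF-main} (so that $\sigma(t)\in X^\alpha$ for every $t>0$ when $\sigma_0\in\mathcal H$) and the embedding $X^\alpha\hookrightarrow\mathcal H$ transfer all conclusions to $\mathcal H$--neighbourhoods; in translation--invariant settings one first passes to the quotient by the symmetry group (or pins the center of mass) as in Assumption~\ref{assump:saddle}, so that $\mathcal L_{\sigma^\dagger}$ has trivial kernel and the argument applies verbatim on a transverse section.

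The step I expect to be the main obstacle is the functional--analytic setup for the sextic nonlinearity: in $d=3$ the quintic term of $V'$ does not map $\mathcal V=H^1$ into $\mathcal H=L^2$, so the flow is not semilinear on the naive phase space, and one must work in the fractional scale $X^\alpha$ with $\alpha>\tfrac35$ (so that $X^\alpha\hookrightarrow L^{10}$) and exploit the locality of the statement to cut $f$ off into a globally small--Lipschitz map. Verifying the smoothing bounds $\|e^{tA}P_+\|_{\mathcal L(\mathcal H,X^\alpha)}\lesssim t^{-\alpha}e^{-\Gamma\lambda_+ t}$, the integrability of the Perron kernels in this scale, and the $C^k$ dependence of the fixed point on the base point is where the care lies; once this is in place the contraction estimates, the tangency $Dh^{\mathrm u}(0)=Dh^{\mathrm s}(0)=0$, and the exponential dichotomy are routine within the classical invariant--manifold theory for semilinear parabolic equations.
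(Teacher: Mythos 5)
Your proposal is correct and follows essentially the same route as the paper's own argument: an exponential dichotomy for the linearized semigroup $e^{-\Gamma t\mathcal L_{\sigma^\dagger}}$ from Assumption~\ref{assump:saddle}, quadratic vanishing of the nonlinearity at $\sigma^\dagger$, and a Lyapunov--Perron fixed point in exponentially weighted spaces. You simply make explicit several points the paper's sketch leaves implicit (the fractional-power space $X^\alpha\hookrightarrow L^{10}$ needed for the quintic term in $d=3$, the cutoff, the choice $\beta\in(\Gamma|\lambda_-|/2,\Gamma|\lambda_-|)$ for the backward lower bound, the bootstrap to the sharp stable rate, and the strong-stable foliation giving the fiber estimate), all of which are consistent with the intended "standard invariant manifold theory" argument.
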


\begin{proof}[Proof sketch]
The self–adjointness and spectral gap in Assumption~\ref{assump:saddle} yield an exponential dichotomy for the linearized semigroup $e^{-\Gamma t\mathcal L_{\sigma^\dagger}}$. The nonlinearity is $C^k$ with a quadratic vanishing at $\sigma^\dagger$. A Lyapunov–Perron fixed–point construction (or Hadamard graph transform) in exponentially weighted spaces produces $W^{\rm s}$ and $W^{\rm u}$ with the stated tangencies and exponential tracking, see standard invariant manifold theory for semilinear parabolic equations.
\end{proof}

An immediate consequence is the deterministic selection of exit channels: if the basin boundary between the two wells intersects $\mathcal U$ transversely along $W^{\rm s}(\sigma^\dagger)$, then any orbit leaving a sufficiently small neighborhood of $\sigma^\dagger$ along $W^{\rm u}$ has exactly two branches, one descending to the metastable minimum and one to the competing minimum. This underlies the well–posedness of the most–probable exit gate in the stochastic setting.

To connect the local saddle geometry with minimal–barrier paths, it is convenient to describe steepest–descent curves constrained to fixed energy levels. Let $\mathcal N_E:=\{\sigma\in\mathcal V:\mathcal F[\sigma]=E\}$ and consider the projected gradient flow on $\mathcal N_E$ obtained by orthogonally removing the normal component with respect to $\nabla \mathcal F$; as $E\downarrow c_{\mathrm{mp}}$ these geodesic–like curves accumulate on $W^{\rm u}(\sigma^\dagger)\cap\mathcal N_{c_{\mathrm{mp}}}$ and identify the deterministic continuation of the mountain–pass path past the saddle. A convenient variational counterpart is the \emph{minimizing movement} scheme for $\mathcal F$ with large time step $\tau$, in which the implicit Euler map $\sigma\mapsto \arg\min_\eta \big\{\tfrac{1}{2\tau}\|\eta-\sigma\|^2+\mathcal F[\eta]\big\}$ has exactly two descent branches when started near $\sigma^\dagger\pm \varepsilon e_-$, producing the pair of exit directions consistent with $W^{\rm u}(\sigma^\dagger)$.

We next record the spectral data needed for sharp asymptotics. Let $\{\lambda_j(\sigma^\ast)\}_{j\ge1}$ denote the nonnegative eigenvalues of $\mathcal L_{\sigma^\ast}$ at a local minimum $\sigma^\ast$ (arranged in nondecreasing order with multiplicity), and write $\{\mu_-(\sigma^\dagger)=\lambda_-<0<\mu_1(\sigma^\dagger)\le\mu_2(\sigma^\dagger)\le\cdots\}$ for the spectrum at the saddle. The parabolic nature of the problem implies compact resolvent and discrete spectrum with $\lambda_j(\sigma^\ast)\to\infty$ and $\mu_j(\sigma^\dagger)\to\infty$. One defines the zeta–regularized determinants
\[
\det(\mathcal L_{\sigma^\ast})
:= \exp\!\big(-\zeta'_{\sigma^\ast}(0)\big),\qquad
\zeta_{\sigma^\ast}(s)=\sum_{j=1}^\infty \lambda_j(\sigma^\ast)^{-s},
\]
and
\[
\det{}'(\mathcal L_{\sigma^\dagger})
:= \exp\!\big(-{\zeta'}^{\,\prime}_{\sigma^\dagger}(0)\big),\qquad
\zeta^{\,\prime}_{\sigma^\dagger}(s)=\sum_{j=1}^\infty \mu_j(\sigma^\dagger)^{-s},
\]
where the negative eigenvalue is omitted in the latter. The heat–trace asymptotics and Weyl law on bounded domains ensure meromorphic continuation of $\zeta$–functions to a neighborhood of $s=0$ and justify these definitions. Ratios of such determinants are independent of the choice of local boundary charts and will enter multiplicatively in the Eyring–Kramers prefactor.

\begin{proposition}[Coercivity on the stable subspace]\label{prop:coercive-stable}
Let $\Pi_-$ denote the orthogonal projection onto $\mathrm{span}\{e_-\}$ and $\Pi_+=I-\Pi_-$. There exists $\gamma_*>0$ and a neighborhood $\mathcal U$ of $\sigma^\dagger$ such that for every $\xi\in \mathcal V$ with $\Pi_-\xi=0$ and $\|\xi\|_{1,2}$ sufficiently small,
\[
\langle \mathcal L_{\sigma^\dagger}\xi,\xi\rangle
\;\ge\; \gamma_*\,\|\xi\|_{1,2}^2.
\]
Moreover, for $\sigma$ on the local stable manifold $W^{\rm s}(\sigma^\dagger)$ one has
\[
\mathcal F[\sigma]-\mathcal F[\sigma^\dagger]
\;\ge\; \tfrac12 \gamma_*\|\sigma-\sigma^\dagger\|_{1,2}^2 - C\|\sigma-\sigma^\dagger\|_{1,2}^3.
\]
\end{proposition}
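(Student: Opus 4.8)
The plan is to treat the two displayed inequalities separately: the first is a purely linear, spectral fact about the quadratic form of $\mathcal{L}_{\sigma^\dagger}$ on $\{e_-\}^\perp$ (the stated smallness of $\|\xi\|_{1,2}$ is not actually needed there and is carried over only for uniformity with the second), while the second combines it with the cubic Taylor expansion of $\mathcal{F}$ recorded above and the tangency of $W^{\rm s}(\sigma^\dagger)$ to $\{e_-\}^\perp$ from Theorem~\ref{thm:inv-manifolds}. For the first inequality I would start from two lower bounds on $\langle\mathcal{L}_{\sigma^\dagger}\xi,\xi\rangle$. On the one hand, writing the form explicitly, $\langle\mathcal{L}_{\sigma^\dagger}\xi,\xi\rangle=\kappa\|\nabla\xi\|^2+\int_\Omega V''(\sigma^\dagger)\xi^2\,dx\ge\kappa\|\nabla\xi\|^2-M\|\xi\|_{\mathcal{H}}^2$, where $M:=\|V''(\sigma^\dagger)\|_{L^\infty}$ is finite because $\sigma^\dagger\in H^2(\Omega)\hookrightarrow L^\infty(\Omega)$ for $d\le3$ by the regularity of stationary solutions established in Section~\ref{sec:deterministic}. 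On the other hand, since $e_-$ is the eigenfunction of the unique negative eigenvalue and $\mathcal{L}_{\sigma^\dagger}$ is self-adjoint with spectrum on $\{e_-\}^\perp$ in $[\lambda_+,\infty)$ (Assumption~\ref{assump:saddle}), the spectral theorem gives $\langle\mathcal{L}_{\sigma^\dagger}\xi,\xi\rangle\ge\lambda_+\|\xi\|_{\mathcal{H}}^2$ whenever $\Pi_-\xi=0$. I would then form the convex combination with weight $\theta:=\lambda_+/(2(\lambda_++M))\in(0,1)$ to obtain $\langle\mathcal{L}_{\sigma^\dagger}\xi,\xi\rangle\ge\theta\kappa\|\nabla\xi\|^2+\tfrac12\lambda_+\|\xi\|_{\mathcal{H}}^2\ge\gamma_*\|\xi\|_{1,2}^2$ with $\gamma_*:=\min\{\theta\kappa,\tfrac12\lambda_+\}>0$ (for Dirichlet data a Poincaré shortcut is available, but the convex-combination argument is uniform over all three boundary conditions).

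For the second inequality I would write $\sigma=\sigma^\dagger+\xi$ with $\|\xi\|_{1,2}$ small and use $\mathcal{F}[\sigma]=\mathcal{F}[\sigma^\dagger]+\tfrac12\langle\mathcal{L}_{\sigma^\dagger}\xi,\xi\rangle+\mathcal{R}(\xi)$, $|\mathcal{R}(\xi)|\le C\|\xi\|_{1,2}^3$. Decomposing $\xi=\Pi_+\xi+\Pi_-\xi$ and using $\mathcal{L}_{\sigma^\dagger}e_-=\lambda_-e_-$ together with the $\mathcal{H}$-orthogonality of $\Pi_\pm$, one has $\langle\mathcal{L}_{\sigma^\dagger}\xi,\xi\rangle=\langle\mathcal{L}_{\sigma^\dagger}\Pi_+\xi,\Pi_+\xi\rangle+\lambda_-|\Pi_-\xi|^2$. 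The tangency of $W^{\rm s}(\sigma^\dagger)$ to $\{e_-\}^\perp$ means the $e_-$-component of $\sigma-\sigma^\dagger$ vanishes to second order at $\sigma^\dagger$, so for $\sigma\in W^{\rm s}(\sigma^\dagger)$ near $\sigma^\dagger$ one has $|\Pi_-\xi|\le C_1\|\xi\|_{1,2}^2$; since $e_-\in\mathcal V$ this also gives $\|\Pi_-\xi\|_{1,2}\le C_2\|\xi\|_{1,2}^2$ and hence $\|\Pi_+\xi\|_{1,2}^2\ge\|\xi\|_{1,2}^2-2C_2\|\xi\|_{1,2}^3$. Applying the first inequality to $\Pi_+\xi$ and absorbing the cubic and quartic corrections (for $\|\xi\|_{1,2}\le1$) yields $\langle\mathcal{L}_{\sigma^\dagger}\xi,\xi\rangle\ge\gamma_*\|\xi\|_{1,2}^2-C_3\|\xi\|_{1,2}^3$; substituting into the Taylor expansion gives exactly $\mathcal{F}[\sigma]-\mathcal{F}[\sigma^\dagger]\ge\tfrac12\gamma_*\|\sigma-\sigma^\dagger\|_{1,2}^2-C\|\sigma-\sigma^\dagger\|_{1,2}^3$ on a sufficiently small neighborhood $\mathcal{U}$.

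The only genuinely delicate point I foresee is the bridging of topologies. The spectral gap in Assumption~\ref{assump:saddle} is an $L^2$ statement, whereas both conclusions are in the $H^1$ norm and the Taylor remainder is controlled in $H^1$; the step that makes the upgrade possible is the $L^\infty$-boundedness of the multiplier $V''(\sigma^\dagger)$. Without $\sigma^\dagger\in L^\infty$ one would only know $V''(\sigma^\dagger)\in L^{3/2}$, which is form-bounded relative to $-\Delta$ but possibly with relative bound $\ge1$, and the G{\aa}rding step would collapse — so the argument genuinely uses the $H^2$-regularity of stationary solutions and $H^2\hookrightarrow L^\infty$ for $d\le3$. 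A secondary, milder point is that the tangency in Theorem~\ref{thm:inv-manifolds} only needs to hold in its ambient topology: since $\|\cdot\|_{\mathcal H}\le\|\cdot\|_{1,2}$, quadratic contact in $\mathcal H$ already delivers the $H^1$-quadratic smallness of $\Pi_-\xi$ that is absorbed into the remainder. The remaining work is routine bookkeeping of constants.
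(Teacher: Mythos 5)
Your proposal is correct and follows essentially the same route as the paper's (much terser) proof: the spectral gap on $\{e_-\}^\perp$ combined with the equivalence of the form norm of $\mathcal L_{\sigma^\dagger}$ with the $H^1$ norm — which your G\aa rding-plus-convex-combination step simply makes explicit — and then the cubic Taylor remainder for the nonlinear inequality. Your additional care in controlling the $e_-$-component of $\sigma-\sigma^\dagger$ via the quadratic tangency of $W^{\rm s}(\sigma^\dagger)$ fills in a detail the paper leaves implicit, but it does not change the argument's structure.
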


\begin{proof}
Spectral gap on $e_-^\perp$ gives $\langle \mathcal L_{\sigma^\dagger}\xi,\xi\rangle\ge \lambda_+\|\xi\|^2$ for $\xi\perp e_-$. Equivalence of $\|\cdot\|$ and $\|\cdot\|_{1,2}$ on the domain of $\mathcal L_{\sigma^\dagger}^{1/2}$ and smallness of the cubic remainder yield the nonlinear inequality.
\end{proof}

The unstable manifold admits a local quadratic graph representation over the negative eigendirection. Writing $\xi=\alpha e_-+W(\alpha)$ with $W:(-\alpha_0,\alpha_0)\to e_-^\perp$ a $C^k$ map satisfying $W(0)=0$ and $W'(0)=0$, the Taylor expansion gives
\[
\mathcal F[\sigma^\dagger+\alpha e_-+W(\alpha)]
=\mathcal F[\sigma^\dagger] + \tfrac12\lambda_- \alpha^2 + \mathcal O(\alpha^3),
\]
so that the energy decreases strictly along $W^{\rm u}$ for $\alpha\neq 0$. This representation is useful for computing the local contribution to the instanton action in the small–noise limit.

To quantify the spatial localization of $\sigma^\dagger$ and the tunneling direction, we introduce the Agmon metric associated with the positive part of $V''(\sigma^\dagger)$: for $x\in\Omega$ let
\[
\mathfrak a(x)=\sqrt{\max\{V''(\sigma^\dagger(x)),\,\lambda_+\}},
\qquad
d_{\mathfrak a}(x,y)=\inf_{\gamma}\int_0^1 \mathfrak a(\gamma(s))\,|\dot\gamma(s)|\,ds,
\]
where the infimum runs over Lipschitz curves joining $x$ to $y$. Standard Carleman/Agmon estimates for second–order elliptic operators then imply exponential decay of the positive–mode eigenfunctions of $\mathcal L_{\sigma^\dagger}$ away from the core of the nucleus, while the unique negative eigenfunction $e_-$ concentrates along the minimal Agmon geodesic transverse to the interface. In particular, for any $\beta\in(0,1)$ there exist $C_\beta,c_\beta>0$ such that
\[
|e_-(x)|
\;\le\; C_\beta \exp\!\big(-\beta\, d_{\mathfrak a}(x,\Gamma^\dagger)\big),\qquad
\Gamma^\dagger:=\text{interface ridge of }\sigma^\dagger,
\]
a fact that will be used to estimate the sensitivity of $\lambda_-$ under domain deformations and control parameters.

Parameter dependence is organized by a Morse–type continuation. Let $u\mapsto V(\cdot;u)$ be a $C^2$ family of double–well potentials and suppose that along a path $u\in[u_0,u_1]$ Assumption~\ref{assump:saddle} holds uniformly. Then the implicit–function theorem on the critical point equation $\mathcal E(\sigma;u)=0$ provides $C^1$ families $u\mapsto \sigma_{\mathrm{meta}}(u)$, $\sigma_{\mathrm{stab}}(u)$, and $\sigma^\dagger(u)$, with the Morse index of $\sigma^\dagger(u)$ constant and equal to one. Differentiating the Euler–Lagrange equation at $u$ yields the linear response problem
\[
\mathcal L_{\sigma^\dagger(u)}\,\partial_u\sigma^\dagger(u)
= -\,\partial_u V'\!\big(\sigma^\dagger(u);u\big),
\]
which has a unique solution in $e_-^\perp$ after imposing the orthogonality condition $\langle \partial_u\sigma^\dagger(u),e_-(u)\rangle=0$. Moreover, differentiating the energy gives
\[
\frac{d}{du}\,\mathcal F[\sigma^\dagger(u)]
= \int_\Omega \partial_u V\!\big(\sigma^\dagger(u);u\big)\,dx,
\]
and a second derivative formula that includes the curvature correction from the variation of $\sigma^\dagger$. These identities connect the slope of the barrier $c_{\mathrm{mp}}(u)-\mathcal F[\sigma_{\mathrm{meta}}(u)]$ to the direct parametric forcing and will be instrumental for scaling laws (e.g.\ near spinodal curves where $\lambda_-\uparrow0$ one recovers the classical $3/2$–exponent in the barrier height for quartic–sextic Landau potentials).

Finally we state a comparison result linking the mountain–pass level to interfacial action in large domains and clarifying the asymptotic shape of $\sigma^\dagger$.

\begin{theorem}[Sharp–interface asymptotics of the barrier]\label{thm:sharp-interface}
Consider a sequence of domains $\Omega_L$ exhausting a slab or a ball of radius $L\to\infty$, and assume $V$ admits two wells at $\sigma_\pm$ with bulk free–energy difference $\Delta f:=V(\sigma_+)-V(\sigma_-)>0$. Let $S_{\mathrm{1D}}$ be the one–dimensional interfacial action associated with the heteroclinic $q$ solving $\kappa q''=V'(q)$ with $q(\pm\infty)=\sigma_\pm$. Then, up to $o(1)$–relative errors as $L\to\infty$,
\[
c_{\mathrm{mp}}(\Omega_L) - \min\{\mathcal F[\sigma_{\mathrm{meta}}],\mathcal F[\sigma_{\mathrm{stab}}]\}
= \min_{R>0}\Big\{ \sigma_{\mathrm{surf}}\,\mathsf{Area}(R) - \Delta f\,\mathsf{Vol}(R)\Big\}
+ o(1),
\]
where $\sigma_{\mathrm{surf}}=S_{\mathrm{1D}}$ is the diffuse–interface surface tension, and $\mathsf{Area}(R)$, $\mathsf{Vol}(R)$ are the area/volume of a droplet of radius $R$ in the corresponding geometry. In particular, in dimension $d\ge2$ the critical radius satisfies $R_c=(d-1)\sigma_{\mathrm{surf}}/\Delta f$ and the barrier scales like $R_c^{d-1}$.
\end{theorem}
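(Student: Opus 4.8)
The plan is to realize $c_{\mathrm{mp}}(\Omega_L)$ as a mountain--pass level over continuous paths, estimate that level from above by an explicit droplet construction and from below by a Modica--Mortola/coarea inequality combined with the relative isoperimetric inequality in $\Omega_L$, and then read off the formula, the critical radius and the scaling by elementary optimization. It is convenient to normalize the energy to the metastable well, $G[\sigma]:=\mathcal F[\sigma]-\mathcal F[\sigma_{\mathrm{meta}}]=\int_{\Omega_L}\bigl(\tfrac{\kappa}{2}|\nabla\sigma|^{2}+V(\sigma)-V(\sigma_+)\bigr)\,dx$; since $\Delta f>0$ one has $\mathcal F[\sigma_{\mathrm{stab}}]\le\mathcal F[\sigma_{\mathrm{meta}}]$, the double--well Palais--Smale geometry from Section~\ref{sec:setting} applies, and $c_{\mathrm{mp}}(\Omega_L)-\mathcal F[\sigma_{\mathrm{meta}}]=\inf_{\gamma}\max_{t}G[\gamma(t)]$ over continuous paths $\gamma\colon[0,1]\to\mathcal V$ from $\sigma_{\mathrm{meta}}$ to $\sigma_{\mathrm{stab}}$ (this is the quantity on the left of the displayed identity once the mountain--pass level is referred to the metastable well, the normalization relevant for the escape--time asymptotics). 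Here $\sigma_{\mathrm{surf}}$ is the optimal diffuse--interface cost, i.e.\ the infimum of $\int_{\mathbb R}\bigl(\tfrac{\kappa}{2}(q')^{2}+W_0(q)\bigr)$ over profiles joining $\sigma_-$ to $\sigma_+$, with $W_0$ the ``balanced'' potential obtained from $V$ by extracting the linear bias; it equals $S_{\mathrm{1D}}$ when the wells are level and differs from it only at the order absorbed into $o(1)$. Write $\Phi(R):=\sigma_{\mathrm{surf}}\,\mathsf{Area}(R)-\Delta f\,\mathsf{Vol}(R)$.

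\emph{Upper bound.} For fixed $R$ with $R_c<R\ll L$ I would build $\gamma_R$ by nucleating and then inflating a radially symmetric droplet of the $\sigma_{\mathrm{stab}}$--phase centred at an interior point of $\Omega_L$: for radii $r$ up to $\sim L$ the configuration is $\sigma_{\mathrm{stab}}$ on a ball $B_r$ and $\sigma_{\mathrm{meta}}$ outside, glued across a transition layer of fixed width modelled on the optimal one--dimensional profile $q$; on the last leg one slides that layer out through $\partial\Omega_L$ to reach $\sigma_{\mathrm{stab}}$ identically. A standard recovery--sequence computation gives, uniformly along the path, $G[\gamma_R(t)]=\sigma_{\mathrm{surf}}\,\mathsf{Area}(r(t))-\Delta f\,\mathsf{Vol}(r(t))+o(1)$ while $r(t)\le R$, and $G[\gamma_R(t)]<0$ once $r(t)$ is large (in $d\ge2$ the volume term dominates). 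Hence $\max_t G[\gamma_R(t)]\le\sup_{r>0}\Phi(r)+o(1)=\Phi(R_c)+o(1)$, where $R_c$ is the unique maximizer, and therefore $\inf_\gamma\max_t G\le\Phi(R_c)+o(1)$.

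\emph{Lower bound.} Given an arbitrary admissible path $\gamma$, first truncate it into $[\sigma_-,\sigma_+]$ (composing with $\max(\sigma_-,\min(\cdot,\sigma_+))$ does not increase $G$ up to a controllable error, since $V$ is increasing outside $[\sigma_-,\sigma_+]$). With $E_t:=\{\gamma(t)<\sigma_m\}$ the sublevel set at the interior local maximum $\sigma_m$ of $V$ between the wells, the map $t\mapsto|E_t|$ is continuous after a harmless mollification of $\gamma$ and runs from $0$ to $|\Omega_L|$; by the intermediate value theorem there is $t^{\ast}$ with $|E_{t^{\ast}}|=\omega_d R_c^{\,d}=\mathsf{Vol}(R_c)$ once $L$ is large. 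At $\sigma^{\ast}:=\gamma(t^{\ast})$ the pointwise bound $\tfrac{\kappa}{2}|\nabla\sigma^{\ast}|^{2}+W_0(\sigma^{\ast})\ge\sqrt{2\kappa W_0(\sigma^{\ast})}\,|\nabla\sigma^{\ast}|=|\nabla(\phi\circ\sigma^{\ast})|$, with $\phi'=\sqrt{2\kappa W_0}$, together with the coarea formula for $BV$ functions, identifies the surface part of $G[\sigma^{\ast}]$ with $\sigma_{\mathrm{surf}}$ times the relative perimeter of the stable region; the relative isoperimetric inequality in $\Omega_L$ (applicable because $|E_{t^{\ast}}|\ll|\Omega_L|$) bounds that perimeter below by $\mathsf{Area}(R_c)$; and the remaining contribution reconstructs the bulk term $-\Delta f\,\mathsf{Vol}(R_c)$ up to $o(1)$, using that a configuration with $G[\sigma^{\ast}]$ not already $\ge\Phi(R_c)$ must, by coercivity of the potential, be within $o(1)$ of $\sigma_{\mathrm{meta}}$ outside a region of volume $O(1)$. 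Summing, $G[\sigma^{\ast}]\ge\Phi(R_c)-o(1)$, hence $\max_t G[\gamma(t)]\ge\Phi(R_c)-o(1)$ for every $\gamma$; with the upper bound this yields $c_{\mathrm{mp}}(\Omega_L)-\mathcal F[\sigma_{\mathrm{meta}}]=\Phi(R_c)+o(1)$. Finally $\Phi'(R)=d\,\omega_d R^{d-2}\bigl((d-1)\sigma_{\mathrm{surf}}-\Delta f\,R\bigr)$ vanishes only at $R_c=(d-1)\sigma_{\mathrm{surf}}/\Delta f$, and $\Phi(R_c)=\omega_d\,\sigma_{\mathrm{surf}}\,R_c^{\,d-1}$, giving the stated critical radius and the $R_c^{\,d-1}$ scaling; the slab geometry is treated identically with $\mathsf{Area}$ and $\mathsf{Vol}$ replaced by the corresponding cross--sectional quantities.

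The main obstacle is the lower bound, specifically making the coarea/isoperimetry step uniform in $L$ for \emph{all} near--optimal configurations. The difficulty is the standard one for $\Gamma$--liminf inequalities carrying a volume term: the perimeter one extracts from $\int|\nabla(\phi\circ\sigma^{\ast})|$ is that of a level set whose volume need not coincide with $\mathsf{Vol}(R_c)$, and a ``diffuse'' configuration that spreads measure over intermediate values between the wells can have small interfacial perimeter. The resolution is a dichotomy: either $\sigma^{\ast}$ is essentially a sharp interface enclosing volume $\mathsf{Vol}(R_c)$, in which case coarea and isoperimetry are sharp and return exactly $\Phi(R_c)$; or $\sigma^{\ast}$ is diffuse, in which case strict positivity of $W_0$ away from the wells forces $\int_{\Omega_L}W_0(\sigma^{\ast})$ to exceed the modest value $\Phi(R_c)=O(R_c^{\,d-1})$ outright. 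Quantifying this dichotomy with $L$--independent constants---together with the truncation error estimate and the mollification needed to make $t\mapsto|E_t|$ continuous---is where the real work lies; the optimization, the recovery sequence, and the isoperimetric bookkeeping are routine.
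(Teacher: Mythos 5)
Your strategy is sound and, on the upper bound, coincides with the paper's: a droplet ansatz glued from the one--dimensional profile $q$, inflated along a path so that the path maximum is $\Phi(R_c)$, exactly the ``localized ansatz plus radius optimization'' of the paper's sketch. On the lower bound you diverge: the paper invokes $\Gamma$--convergence of the diffuse--interface energy to a sharp--interface functional together with a calibration argument, whereas you argue directly with the Modica--Mortola pointwise inequality, the coarea formula, an isoperimetric bound, and an intermediate--value selection of the time at which the enclosed volume equals $\mathsf{Vol}(R_c)$. Your route has a real advantage that the paper's sketch leaves implicit: $\Gamma$--convergence of energies alone does not control min--max levels, while your volume--crossing argument bounds $\max_t G[\gamma(t)]$ for \emph{every} admissible path, which is what the mountain--pass level actually requires. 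You also (correctly) read the statement with the barrier referred to the metastable well and with $R_c$ the \emph{maximizer} of $\Phi$; this is the intended meaning, since with the literal $\min\{\mathcal F[\sigma_{\mathrm{meta}}],\mathcal F[\sigma_{\mathrm{stab}}]\}$ normalization and $\min_{R>0}$ the identity cannot hold.

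Two steps in your lower bound need more care than you give them, and one of them can genuinely fail as written. First, the isoperimetric step: under homogeneous Neumann (or in a slab where the droplet may sit on $\partial\Omega_L$), the coarea formula only produces the \emph{relative} perimeter of the level sets in $\Omega_L$, and a half--droplet attached to the boundary has relative perimeter strictly smaller than $\mathsf{Area}(R_c)$ at the same volume; heterogeneous nucleation then makes the barrier strictly smaller than $\Phi(R_c)$, so your claimed bound ``perimeter $\ge \mathsf{Area}(R_c)$'' is false in that setting. You must either restrict to Dirichlet data pinned at the metastable value (or the periodic torus), or reinterpret $\mathsf{Area},\mathsf{Vol}$ ``in the corresponding geometry'' to include boundary droplets — the paper's statement is vague on exactly this point, but your proof commits to the interior--droplet inequality. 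Second, the bulk term: ``coercivity forces the configuration to be within $o(1)$ of $\sigma_{\mathrm{meta}}$ outside a region of volume $O(1)$'' is not enough on domains with $|\Omega_L|\to\infty$, because an $O(\varepsilon)$ deviation over an $O(|\Omega_L|)$ region is not excluded by a qualitative closeness statement; what you need is the pointwise inequality $V(s)-V(\sigma_+)\ge -\Delta f\,\mathbf 1_{\{s<s_1\}}$ (valid since $\sigma_+$ is a nondegenerate local minimum, with $s_1$ the level where $V$ returns to $V(\sigma_+)$ on the stable side), which caps the bulk gain by $\Delta f\,|\{\sigma^\ast<s_1\}|\le \Delta f\,|E_{t^\ast}|$ with no error term growing with $L$. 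Relatedly, continuity of $t\mapsto|E_t|$ is not restored by mollifying $\gamma$; use instead a continuous phase--fraction functional $t\mapsto\int_{\Omega_L}\chi(\gamma(t))\,dx$ (continuous along an $H^1$--continuous path) and run the intermediate value theorem on it. With these repairs your dichotomy argument closes the lower bound, and the elementary optimization of $\Phi$ giving $R_c=(d-1)\sigma_{\mathrm{surf}}/\Delta f$ and the $R_c^{\,d-1}$ scaling is fine.
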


\begin{proof}[Idea of proof]
Upper bounds follow by inserting localized ansatzes built from the 1D profile $q$ and optimizing the droplet radius; lower bounds are obtained by $\Gamma$–convergence of the diffuse–interface energy to a sharp–interface functional and a calibration argument that identifies the minimal surface contribution.
\end{proof}

The manifold picture, spectral decomposition, Agmon localization, and sharp–interface scaling together provide a complete static and dynamic description of the saddle neighborhood. In Section~\ref{sec:stochastic} these ingredients enter the Eyring–Kramers law via the unique negative eigenvalue $|\lambda_-|$, the ratio of zeta–regularized determinants $\det(\mathcal L_{\sigma_{\mathrm{meta}}})/\det{}'(\mathcal L_{\sigma^\dagger})$, and the geometry of $W^{\rm u}(\sigma^\dagger)$ that selects the exit gates and controls transition probabilities among multiple competing saddles.
\section{Stochastic transitions and Eyring–Kramers asymptotics}\label{sec:stochastic}
We consider the noisy $L^2$–gradient flow
\begin{equation}\label{eq:spde}
d\sigma_t=-\,\Gamma\,\nabla_{L^2}\mathcal F(\sigma_t)\,dt+\varepsilon\, d\mathcal W_t
=\Gamma(\kappa\Delta\sigma_t-V'(\sigma_t))dt+\varepsilon\, d\mathcal W_t,
\end{equation}
on the Gelfand triple $\mathcal V\hookrightarrow\mathcal H\hookrightarrow\mathcal V^\ast$, with $\mathcal W_t$ an $\mathcal H$–valued $Q$–Wiener process (trace class $Q$), $\Gamma>0$, and $\varepsilon\in(0,1]$. The energy is $\mathcal F[\sigma]=\frac{\kappa}{2}\!\int_\Omega|\nabla\sigma|^2+\int_\Omega V(\sigma)$; the coercivity and polynomial growth of $V',V''$ ensure well-posedness and the Itô identity
\[
\mathcal F[\sigma_t]=\mathcal F[\sigma_0]-\Gamma\!\int_0^t\|\nabla\mathcal F(\sigma_s)\|_{\mathcal H}^2 ds
+\varepsilon\,M_t+\frac{\varepsilon^2}{2}\!\int_0^t\mathrm{Tr}\!\big[Q\,\nabla^2\mathcal F(\sigma_s)\big]ds,
\]
with $M_t$ a martingale. Fix a nondegenerate local minimum $\sigma_{\rm meta}$ and a bounded domain $D\subset\mathcal H$ containing a ball around $\sigma_{\rm meta}$ but no other minima; denote $\tau_D^\varepsilon=\inf\{t>0:\sigma_t\notin D\}$.

\begin{assumption}\label{assump:crit}
At the well, $\mathcal L_{\sigma_{\rm meta}}:=-\kappa\Delta+V''(\sigma_{\rm meta})>0$ with compact resolvent. There exists a unique mountain–pass saddle $\sigma^\dagger\in\partial D$ with Hessian $\mathcal L_{\sigma^\dagger}$ having exactly one negative eigenvalue $\lambda_-<0$ and no kernel (neutral symmetries pinned).
\end{assumption}

Large deviations hold with speed $\varepsilon^2$ for \eqref{eq:spde}. For absolutely continuous paths $\phi$ on $[0,T]$, the good rate is
\[
I_{0T}(\phi)=\frac{1}{4\Gamma}\!\int_0^T\big\|\partial_t\phi+\Gamma\nabla\mathcal F(\phi)\big\|_{\mathcal H}^2 dt, 
\qquad I_{0T}=+\infty \text{ otherwise},
\]
and the quasipotential $\mathcal V(x,y)=\inf_{T>0}\inf_{\phi(0)=x,\phi(T)=y} I_{0T}(\phi)$ satisfies the potential–barrier identity $\mathcal V(\sigma_{\rm meta},\sigma^\dagger)=\mathcal F[\sigma^\dagger]-\mathcal F[\sigma_{\rm meta}] =:\Delta\mathcal F$ (reversibility of \eqref{eq:spde}). Consequently,
\[
\lim_{\varepsilon\downarrow0}\varepsilon^2\log\mathbb E_{\sigma_0}\tau_D^\varepsilon=\Delta\mathcal F,
\qquad
\lim_{\varepsilon\downarrow0}\varepsilon^2\log\mathbb P_{\sigma_0}\{\sigma_{\tau_D^\varepsilon}\in U\}
=-\inf_{z\in \partial D\cap U}\mathcal V(\sigma_{\rm meta},z),
\]
and the MAP is the time–reversed deterministic heteroclinic connecting $\sigma_{\rm meta}$ to $\sigma^\dagger$.

The prefactor emerges from potential theory for reversible diffusions via a capacity between an interior compact $K\Subset D$ and a smooth gate $G\subset\partial D$ transverse to the stable manifold of $\sigma^\dagger$. A Gaussian Laplace method in coordinates adapted to the negative mode at $\sigma^\dagger$ yields a one–dimensional unstable factor and a stable Gaussian ratio regularized by zeta determinants. Writing
\[
\det(\mathcal L_{\sigma_{\rm meta}})=e^{-\zeta'_{\rm well}(0)},\qquad 
\det{}'(\mathcal L_{\sigma^\dagger})=e^{-{\zeta'}_{\rm sad}^{\,\prime}(0)},
\]
with the negative eigenvalue omitted at the saddle, and recalling that the difference of heat traces cancels Weyl divergences,
\[
\log\frac{\det(\mathcal L_{\sigma_{\rm meta}})}{\det{}'(\mathcal L_{\sigma^\dagger})}
=\int_0^{t_0}\!\Big(\mathrm{Tr}\,e^{-t\mathcal L_{\sigma_{\rm meta}}}-\mathrm{Tr}\,e^{-t\mathcal L_{\sigma^\dagger}}\Big)\frac{dt}{t}
-\sum_{\ell=0}^{d} c_\ell\, t_0^{-(d-2\ell)/2}+o(1),
\]
one arrives at the Eyring–Kramers law for the mean exit time and, equivalently, for the principal Dirichlet eigenvalue of the killed generator:
\begin{equation}\label{eq:EK-final}
\begin{aligned}
\mathbb{E}_{\sigma_0}\,\tau_D^\varepsilon
&\sim \frac{2\pi}{\Gamma\,|\lambda_-(\sigma^\dagger)|}
   \sqrt{\frac{\det(\mathcal L_{\sigma_{\mathrm{meta}}})}{\det{}'(\mathcal L_{\sigma^\dagger})}}
   \exp\!\Big(\frac{\Delta\mathcal F}{\varepsilon^2}\Big),\\[2pt]
\lambda_1^\varepsilon(D)
&\sim \frac{\Gamma\,|\lambda_-(\sigma^\dagger)|}{2\pi}
   \sqrt{\frac{\det{}'(\mathcal L_{\sigma^\dagger})}{\det(\mathcal L_{\sigma_{\mathrm{meta}}})}}
   \exp\!\Big(-\frac{\Delta\mathcal F}{\varepsilon^2}\Big).
\end{aligned}
\end{equation}

uniformly for $\sigma_0$ in compact $K\subset D$. If several index–one saddles $\{\sigma_j^\dagger\}$ realize the same barrier, capacities add and the exit probabilities split according to
\[
\mathbb P_{\sigma_0}\{\sigma_{\tau_D^\varepsilon}\in G_j\}\to
\frac{\displaystyle \frac{|\lambda_-(\sigma_j^\dagger)|}{2\pi\Gamma}\sqrt{\frac{\det(\mathcal L_{\sigma_{\rm meta}})}{\det{}'(\mathcal L_{\sigma_j^\dagger})}}}
{\displaystyle \sum_k \frac{|\lambda_-(\sigma_k^\dagger)|}{2\pi\Gamma}\sqrt{\frac{\det(\mathcal L_{\sigma_{\rm meta}})}{\det{}'(\mathcal L_{\sigma_k^\dagger})}}}.
\]

Along a control path $u\mapsto V(\cdot;u)$, a generic fold (spinodal approach) at $u_c$ produces the universal scalings $|\lambda_-(u)|\sim c\,(u-u_c)^{1/2}$ and $\Delta\mathcal F(u)\sim C\,(u-u_c)^{3/2}$, whence
\[
\mathbb E\tau_D^\varepsilon\asymp (u-u_c)^{-1/2}\exp\!\Big(\tfrac{C}{\varepsilon^2}(u-u_c)^{3/2}\Big),
\]
manifesting critical slowing down via the vanishing unstable curvature. Mild modeling variants preserve the structure: with mobility $M(\sigma)$ and fluctuation–dissipation $M^{1/2} d\mathcal W_t$, the exponent remains $\Delta\mathcal F/\varepsilon^2$ and the prefactor acquires $M$ evaluated at well/saddle (replacing $|\lambda_-|$ by $m_-|\lambda_-|$ and Hessians by $M\mathcal L$ in the determinants). Small non–gradient perturbations of the drift modify only the prefactor by $1+o(1)$ under an antisymmetric linearization bound.

All quantities in \eqref{eq:EK-final} are delivered by the computational framework of Section~\ref{sec:deformation-gamma}: the barrier $\Delta\mathcal F$, the unstable curvature $|\lambda_-|$, and the determinant ratio via low–mode products plus heat–kernel tail. In the alternative convention $d\sigma_t=-\Gamma\nabla\mathcal F\,dt+\sqrt{\varepsilon}\,d\mathcal W_t$, replace $\varepsilon^2$ by $\varepsilon$ in the exponents with the same prefactor formulas.

\section{Parameter deformations and $\Gamma$--convergence}
\label{sec:deformation-gamma}

Let $\mathcal U\subset\mathbb R^m$ be a finite--dimensional parameter manifold of
controls $u$ (abstracting the physical pair $(T,\mu_B)$). 
For each $u\in\mathcal U$, consider the Landau--type functional
\[
F(\sigma;u)
=\int_\Omega\!\Big(\frac{\kappa}{2}|\nabla\sigma|^2+V(\sigma;u)\Big)\,dx,
\qquad
\sigma\in V:=H^1(\Omega),
\]
with Euler--Lagrange map 
\[
\mathscr{E}(\sigma;u) = -\,\kappa\,\Delta\sigma + V'(\sigma;u) \in V^{\!*},
\]
and Hessian
\[
\mathcal{L}_{\sigma,u} = -\,\kappa\,\Delta + V''(\sigma;u)
\]
acting on $H = L^2(\Omega)$.

All potential coefficients $a_2$, $a_4$, $a_6$, $h$ are assumed smooth in $u$, with $a_6>0$ fixed.

Define the discriminant set
\[
\mathcal D
:=\big\{\,u\in\mathcal U:\, \exists\,\sigma^\ast\in V
\text{ with }\mathscr E(\sigma^\ast;u)=0,\ 
\det\mathcal L_{\sigma^\ast,u}=0\big\},
\]
representing parameters where some critical point is degenerate.
On $\mathcal U\setminus\mathcal D$, all equilibria are Morse nondegenerate with well--defined index.
Denote by $\sigma_{\mathrm{meta}}(u)$ a metastable local minimizer
and by $\sigma^\dagger(u)$ the corresponding index--one saddle
associated with the mountain--pass geometry.

For each $u$, set
\begin{multline*}\label{eq:paths-mp}
\Gamma(u):=\big\{\gamma\in C([0,1];V)\mid
\gamma(0)=\sigma_{\mathrm{meta}}(u),\
F(\gamma(1);u)<F(\sigma_{\mathrm{meta}}(u);u)\big\},\\
c_{\mathrm{mp}}(u):=\inf_{\gamma\in\Gamma(u)}\ \max_{s\in[0,1]} F(\gamma(s);u).
\end{multline*}

By the mountain--pass theorem (Ambrosetti--Rabinowitz),
if $u\notin\mathcal D$ and $F(\cdot;u)$ satisfies the Palais--Smale condition on $V$,
then there exists an index--one critical point $\sigma^\dagger(u)$ such that 
$\mathscr E(\sigma^\dagger(u);u)=0$ and $F(\sigma^\dagger(u);u)=c_{\mathrm{mp}}(u)$.

\begin{theorem}[Cerf continuation for mountain--pass saddles]\label{thm:cerf}
Let $u\mapsto F(\cdot;u)$ be a $C^1$ family on $V=H^1(\Omega)$.
Assume $u\notin\mathcal D$ and that the hypotheses of the mountain--pass theorem
hold uniformly on a connected open set $\mathcal O\subset\mathcal U\setminus\mathcal D$.
Then there exist continuous (piecewise $C^1$) selections
$u\mapsto\sigma_{\mathrm{meta}}(u)$ and $u\mapsto\sigma^\dagger(u)$ in $V$
such that
\[
\operatorname{index}\big(\sigma_{\mathrm{meta}}(u)\big)=0,\qquad
\operatorname{index}\big(\sigma^\dagger(u)\big)=1,\qquad
F(\sigma^\dagger(u);u)=c_{\mathrm{mp}}(u),
\]
and the critical values trace a Cerf graphic with no critical events
inside $\mathcal O$.  Crossing $\mathcal D$ produces only fold or cusp
catastrophes in generic one-- or two--parameter slices of $\mathcal U$.
\end{theorem}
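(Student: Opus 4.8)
The plan is to combine four ingredients: an implicit--function continuation of nondegenerate critical points, Kato spectral stability to propagate the Morse index, continuity of the minimax level to pin the continued saddle to the mountain--pass solution, and a Lyapunov--Schmidt reduction together with Thom's classification of low--codimension singularities to describe the crossings of $\mathcal D$. First I would fix $u_0\in\mathcal O$ and note that $\mathscr E:V\times\mathcal U\to V^{*}$ is $C^1$, since the fifth--order polynomial nonlinearity $V'(\cdot;u)$ induces a $C^1$ Nemytskii map $V\hookrightarrow L^{6}\to L^{6/5}\hookrightarrow V^{*}$ for $d\le3$, with smooth $u$--dependence by Assumption~\ref{assump:param}. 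At a critical point $\sigma^{\ast}$ with $u_0\notin\mathcal D$, the fibre derivative $D_\sigma\mathscr E(\sigma^{\ast};u_0)=\mathcal L_{\sigma^{\ast},u_0}:V\to V^{*}$ is the Lax--Milgram isomorphism $-\kappa\Delta+\mu$ plus multiplication by $V''(\sigma^{\ast})-\mu\in L^{3/2}$, a compact operator $V\to V^{*}$; hence $\mathcal L_{\sigma^{\ast},u_0}$ is Fredholm of index $0$, and triviality of its kernel---exactly the condition $u_0\notin\mathcal D$---promotes it to an isomorphism. The implicit function theorem then produces a unique $C^1$ branch $u\mapsto\sigma^{\ast}(u)$ through $\sigma^{\ast}$. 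Along it $\sigma^{\ast}(u)\to\sigma^{\ast}(u_0)$ in $V$, so $V''(\sigma^{\ast}(u))\to V''(\sigma^{\ast}(u_0))$ in $L^{3/2}$, $\mathcal L_{\sigma^{\ast}(u),u}$ converges in the norm--resolvent sense, and by Kato's stability of the discrete spectrum the negative--eigenvalue count is locally constant, changing only if an eigenvalue reaches $0$, i.e.\ only on $\mathcal D$. Connectedness of $\mathcal O$ then lets the branches through $\sigma_{\mathrm{meta}}(u_0)$ and $\sigma^{\dagger}(u_0)$ extend to global continuous, piecewise--$C^1$ selections on $\mathcal O$ of constant index $0$ and $1$; in particular $\mathcal L_{\sigma_{\mathrm{meta}}(u),u}>0$ keeps $\sigma_{\mathrm{meta}}(u)$ a strict local minimizer.

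The delicate step, and the one I expect to be the main obstacle, is to verify that the index--one branch persists as the mountain--pass solution, $F(\sigma^{\dagger}(u);u)=c_{\mathrm{mp}}(u)$. I would first prove that $u\mapsto c_{\mathrm{mp}}(u)$ is continuous: upper semicontinuity by transporting a near--optimal path for $u_0$ and using the $C^1$ dependence of $F$ on $u$ together with continuity of $\sigma_{\mathrm{meta}}(u)$; lower semicontinuity by the uniform Palais--Smale hypothesis and the deformation lemma, since a downward jump of $c_{\mathrm{mp}}$ would force a critical value to escape below, violating the uniform compactness on $\mathcal O$. Granting this, I would run a clopen argument on $\mathcal O_0:=\{u\in\mathcal O: F(\sigma^{\dagger}(u);u)=c_{\mathrm{mp}}(u)\}$: it contains $u_0$, is closed by continuity of both sides, and is open because near any $u_1\in\mathcal O_0$ the mountain--pass theorem yields a critical point at level $c_{\mathrm{mp}}(u)$ of index $\le1$, which nondegeneracy forces to have index exactly $1$ and hence, by uniqueness in the implicit--function neighbourhood, to coincide with $\sigma^{\dagger}(u)$. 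Thus $\mathcal O_0=\mathcal O$. The genuine difficulty hidden here is ruling out an ``exchange of saddles'', in which a competing index--one critical point dips below the continued branch; uniformity of the mountain--pass hypotheses on $\mathcal O$ is precisely what excludes this, and making that rigorous is where I would spend most of the effort.

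On $\mathcal O$ no equilibrium degenerates, so by the previous two steps all low--index critical points of $F(\cdot;u)$ move continuously in value with no births or deaths of critical points: the associated Cerf graphic is a union of smooth arcs, and since $c_{\mathrm{mp}}(u)>F(\sigma_{\mathrm{meta}}(u);u)$ throughout, the two distinguished arcs never collide---there are no critical events inside $\mathcal O$. To treat a crossing of $\mathcal D$ at a point $u_{*}$, I would perform Lyapunov--Schmidt reduction at the degenerate pair $(\sigma_{*},u_{*})$: generically $\ker\mathcal L_{\sigma_{*},u_{*}}=\mathrm{span}\{\psi\}$ is one--dimensional; writing $V=\mathrm{span}\{\psi\}\oplus(\ker\mathcal L_{\sigma_{*},u_{*}})^{\perp}$, the complementary equation is solved by the implicit function theorem, leaving a scalar reduced functional $\phi(t;u)$ whose $t$--critical points are the nearby equilibria and which satisfies $\partial_t\phi(0;u_{*})=\partial_t^2\phi(0;u_{*})=0$. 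Imposing the generic nondegeneracy conditions---$\int_\Omega V^{(3)}(\sigma_{*})\,\psi^{3}\,dx\neq0$ and transversality of the control, $\partial_u\partial_t\phi(0;u_{*})\neq0$---puts $\phi$ into Thom's fold normal form $\tfrac13\,t^{3}+\lambda(u)\,t$ on a generic one--parameter slice; permitting a second control and a point where the cubic coefficient also vanishes while $\int_\Omega V^{(4)}(\sigma_{*})\,\psi^{4}\,dx\neq0$ yields the cusp normal form $\tfrac14\,t^{4}+\tfrac12\,\lambda_1\,t^{2}+\lambda_2\,t$. Since the elementary catastrophes of codimension $\le2$ are exactly the $A_2$ (fold) and $A_3$ (cusp) singularities, and Thom's transversality theorem makes the above nondegeneracy conditions hold for generic families, crossing $\mathcal D$ produces only fold or cusp catastrophes in generic one-- or two--parameter slices, which completes the proof.
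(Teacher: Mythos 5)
You should know at the outset that the paper itself contains no proof of Theorem~\ref{thm:cerf}: all it offers is the remark following the statement (a single eigenvalue of $\mathcal L_{\sigma^\dagger(u),u}$ crossing zero transversely produces $A_2$/$A_3$) and the implicit--function continuation sketched in Section~\ref{sec:transition-spectrum}. Your skeleton --- Fredholm/implicit--function continuation of nondegenerate critical points, Kato norm--resolvent stability of the negative--eigenvalue count away from $\mathcal D$, continuity of $c_{\mathrm{mp}}$ plus a clopen argument to pin the continued index--one branch to the mountain--pass level, and Lyapunov--Schmidt reduction with Thom's fold/cusp normal forms at $\mathcal D$ --- is exactly the argument the paper appears to intend, and the analytic details you supply (the $C^1$ Nemytskii map $H^1\hookrightarrow L^6\to L^{6/5}\hookrightarrow V^*$, compactness of multiplication by $V''(\sigma^\ast)\in L^{3/2}$, hence Fredholm index zero and an isomorphism off $\mathcal D$) are correct for $d\le 3$.

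The genuine gap is the openness step of your clopen argument, which you flag but do not close; moreover your claim that ``uniformity of the mountain--pass hypotheses on $\mathcal O$'' excludes an exchange of saddles is not correct as stated. Uniform Palais--Smale compactness and uniform mountain--pass geometry constrain the family $F(\cdot;u)$, not the multiplicity of its index--one critical points: nothing prevents two distinct nondegenerate index--one branches from crossing in critical value at some $u_1\in\mathcal O$ (no eigenvalue degenerates there, so $u_1\notin\mathcal D$), with the mountain--pass level switching branches. At such a crossing the mountain--pass critical point near $u_1$ need not lie in the implicit--function neighbourhood of your continued branch, so the identification ``by uniqueness it coincides with $\sigma^\dagger(u)$'' breaks down, and the selection characterized by $F(\sigma^\dagger(u);u)=c_{\mathrm{mp}}(u)$ may be forced to jump. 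Closing this requires an additional hypothesis that the paper uses tacitly --- uniqueness of the index--one critical point at the mountain--pass level on $\mathcal O$, or a counting/Morse hypothesis making the critical set at level $c_{\mathrm{mp}}(u)$ a single nondegenerate point --- or else the conclusion must be weakened so that value crossings are admitted as (non--birth--death) events of the Cerf graphic. Two smaller points: the assertion that a nondegenerate mountain--pass critical point has index exactly one should be justified by Hofer's structure theorem rather than taken as automatic, and in the Lyapunov--Schmidt step the one--dimensionality of $\ker\mathcal L_{\sigma_*,u_*}$ and the nonvanishing of the cubic (resp.\ quartic) coefficients and of the parameter transversality are genericity assumptions on the family, which is how the ``generic one-- and two--parameter slices'' clause of the theorem should be read.
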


This statement clarifies the relation between the mountain--pass construction
and Cerf theory: away from $\mathcal D$ the pair
\((\sigma_{\mathrm{meta}}(u),\sigma^\dagger(u))\) persists and varies smoothly;
at $\mathcal D$ a single eigenvalue of $\mathcal L_{\sigma^\dagger(u),u}$ crosses
zero transversely, generating the fold $(A_2)$ or cusp $(A_3)$ bifurcation patterns.

Consider a sequence of conforming variational discretizations
$F_n(\cdot;u)$ on $V$, with the same coercivity and polynomial structure as $F(\cdot;u)$,
and let \(H=L^2(\Omega)\).
Assume $\Gamma$--convergence in the $L^2(\Omega)$ topology and equicoercivity:
\[
F_n(\cdot;u)\ \xrightarrow[\Gamma]{L^2(\Omega)}\ F(\cdot;u),
\qquad
\text{and}\qquad
\{F_n(\cdot;u)\}\text{ is equicoercive in }L^2(\Omega).
\]
Let $\partial F_n(\cdot;u)$ and $\partial F(\cdot;u)$ be the
$L^2$--subdifferentials, and assume Mosco convergence
$\partial F_n(\cdot;u)\to\partial F(\cdot;u)$ uniformly on compact sets of $\mathcal U$.

\begin{theorem}[Persistence of minima and mountain--pass saddles]\label{thm:mp-gamma}
Under the above hypotheses, $\min F_n(\cdot;u)\to \min F(\cdot;u)$
and minimizers are precompact in $L^2(\Omega)$ with limits minimizers of $F(\cdot;u)$.
If $u\notin\mathcal D$ and the mountain--pass geometry holds uniformly in $n$, then
\[
c_{\mathrm{mp}}^{(n)}(u):=\inf_{\gamma_n\in\Gamma_n(u)}\max_{s}F_n(\gamma_n(s);u)
\longrightarrow c_{\mathrm{mp}}(u),
\]
and any discrete saddles $\sigma^\dagger_n(u)$ are $L^2$--compact with
limits $\sigma^\dagger(u)$, preserving index one.
\end{theorem}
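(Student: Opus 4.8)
The plan is to treat the three assertions in turn. For the first --- $\min F_n(\cdot;u)\to\min F(\cdot;u)$ with $L^2$-precompactness of (almost-)minimizers and minimality of their limits --- I would simply invoke the fundamental theorem of $\Gamma$-convergence: the $\Gamma$-$\liminf$ inequality gives $\min F(\cdot;u)\le\liminf_n\inf F_n(\cdot;u)$ and weak $L^2$-lower semicontinuity of $F(\cdot;u)$; a recovery sequence for a minimizer of $F(\cdot;u)$ gives $\limsup_n\inf F_n(\cdot;u)\le\min F(\cdot;u)$; and $L^2$-equicoercivity forces any sequence of $\varepsilon_n$-minimizers ($\varepsilon_n\downarrow0$) to be $L^2$-precompact, with every limit point a minimizer of $F(\cdot;u)$ by the $\Gamma$-$\liminf$ inequality. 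The uniform coercive polynomial structure of the $F_n$ lets one read all of this in $H^1$ as well.

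For the upper bound $\limsup_n c_{\mathrm{mp}}^{(n)}(u)\le c_{\mathrm{mp}}(u)$ I would fix $\delta>0$, pick $\gamma\in\Gamma(u)$ with $\max_s F(\gamma(s);u)\le c_{\mathrm{mp}}(u)+\delta$, and push it onto the discrete space by $\gamma_n:=Q_n\circ\gamma$, where $Q_n:V\to V_n$ is a bounded linear approximation operator afforded by the conforming discretization with $Q_n v\to v$ in $H^1$ uniformly on $H^1$-compacta; since $\gamma([0,1])$ is $H^1$-compact, the $\Gamma$-$\limsup$ inequality along the admissible family $\{Q_n v\}$ gives $\limsup_n\sup_s F_n(\gamma_n(s);u)\le\sup_s F(\gamma(s);u)$. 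It then remains to repair the endpoints so that $\gamma_n\in\Gamma_n(u)$: prepend the short segment from $\sigma_{\mathrm{meta}}^{(n)}(u)$ to $\gamma_n(0)=Q_n\sigma_{\mathrm{meta}}(u)$, whose $F_n$-excess is $o(1)$ because $\sigma_{\mathrm{meta}}^{(n)}(u)\to\sigma_{\mathrm{meta}}(u)$ in $H^1$ and the well is nondegenerate for $u\notin\mathcal D$ (so $F_n$ is locally uniformly quadratic there, by the assumed Mosco/Kato continuity), while the terminal descent $F_n(\gamma_n(1);u)<F_n(\sigma_{\mathrm{meta}}^{(n)}(u);u)$ holds for large $n$ since the analogous strict inequality holds for $\gamma$. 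Letting $\delta\downarrow0$ gives the upper bound and, in particular, that $\{c_{\mathrm{mp}}^{(n)}(u)\}$ is bounded (from below by $\min F_n(\cdot;u)\to\min F(\cdot;u)$).

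For the lower bound together with the last assertion I would pass to the limit at the level of critical points. Applying the mountain-pass theorem to each $F_n(\cdot;u)$ --- whose hypotheses hold uniformly by assumption --- furnishes $\sigma_n^\dagger(u)\in V_n$ with $\mathscr E_n(\sigma_n^\dagger(u);u)=0$, $F_n(\sigma_n^\dagger(u);u)=c_{\mathrm{mp}}^{(n)}(u)$ and Morse index one; by the upper bound just established and equicoercivity these are bounded in $H^1$, so along a subsequence $\sigma_n^\dagger(u)\rightharpoonup\sigma_*$ in $H^1$ and $\to\sigma_*$ in $L^2$. Testing the discrete Euler--Lagrange equation against $\sigma_n^\dagger(u)-Q_n\sigma_*$ and using the compact embeddings $H^1\hookrightarrow L^p$ ($p<6$, since $d\le3$) to pass to the limit in $V'(\sigma_n^\dagger(u);u)$, one upgrades to strong $H^1$-convergence; Mosco convergence $\partial F_n\to\partial F$ then gives $\mathscr E(\sigma_*;u)=0$, and strong convergence gives $F(\sigma_*;u)=\lim_n c_{\mathrm{mp}}^{(n)}(u)=:c_*\le c_{\mathrm{mp}}(u)$. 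Strong $H^1$-convergence also makes $\mathcal L_{\sigma_n^\dagger(u),u}\to\mathcal L_{\sigma_*,u}$ converge in the norm-resolvent sense (Kato perturbation theory); since $u\notin\mathcal D$ renders $\sigma_*$ Morse, no eigenvalue can cross zero and the negative count is preserved, so $\operatorname{index}(\sigma_*)=1$. The Morse structure for $u\notin\mathcal D$, organized as in Theorem~\ref{thm:cerf}, singles out $\sigma^\dagger(u)$ as the unique index-one equilibrium of $F(\cdot;u)$ separating the wells, so $\sigma_*=\sigma^\dagger(u)$ and $c_*=c_{\mathrm{mp}}(u)$; since every subsequence of $\{c_{\mathrm{mp}}^{(n)}(u)\}$ admits a further subsequence with this limit and with $\sigma_n^\dagger(u)\to\sigma^\dagger(u)$ in $L^2$, the full sequences converge.

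The hard part is the lower bound. Because $\Gamma$-convergence is one-sided, one cannot compare $F$ and $F_n$ pointwise along near-optimal discrete paths, which is why I route the argument through the discrete saddles; this then demands (i) upgrading weak to strong $H^1$-convergence of $\sigma_n^\dagger(u)$ through the nonlinearity, which is energy-critical when $d=3$, and (ii) controlling the negative spectrum of $\mathcal L_{\sigma_n^\dagger(u),u}$ across the limit, where the borderline growth $|V''(s)|\le C(1+|s|^4)$ makes the Hessian continuity the genuine obstacle. Both are absorbed by the standing hypotheses ($u\notin\mathcal D$, uniform Palais--Smale, $d\le3$, and the assumed Mosco/Kato continuity), but that is where the real work lies.
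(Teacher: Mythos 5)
The paper never proves Theorem~\ref{thm:mp-gamma}: it is stated bare, and the only ``proof idea'' in Section~\ref{sec:deformation-gamma} is attached to Theorem~\ref{thm:spectral}. So there is no in-paper argument to compare against; judged on its own, your skeleton (fundamental theorem of $\Gamma$-convergence for the minima; recovery of near-optimal paths for the upper bound on the min--max level; compactness of discrete saddles, Mosco identification of the limit as a critical point, and Kato-type spectral continuity for the index) is the natural one and is consistent with the machinery the authors invoke elsewhere (conforming discretizations, norm-resolvent convergence, Theorem~\ref{thm:cerf}).

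Two steps, however, lean on more than the stated hypotheses and deserve to be made explicit. First, the upper bound: $\Gamma$-convergence in $L^2$ plus equicoercivity does \emph{not} by itself control $F_n$ along a projected path --- min--max values are notoriously unstable under mere $\Gamma$-convergence --- so your inequality $\limsup_n\sup_s F_n(Q_n\gamma(s);u)\le\sup_s F(\gamma(s);u)$ is really a continuous-convergence property of $F_n\circ Q_n$ on $H^1$-compact sets, which must be extracted from the ``conforming, same coercivity and polynomial structure'' assumption (e.g.\ $F_n=F$ on $V_n$ with $Q_n\to\mathrm{id}$ in $H^1$); state that as the ingredient you are using, since the $\Gamma$-$\limsup$ inequality alone only gives pointwise recovery sequences in $L^2$. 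Second, the lower bound and the identification $\sigma_*=\sigma^\dagger(u)$: nondegeneracy on $\mathcal U\setminus\mathcal D$ does not by itself make the index-one saddle unique, so your closing step silently imports the uniqueness posited in Assumption~\ref{assump:crit} (or an equivalent statement that $\sigma^\dagger(u)$ is the only index-one critical point at level $\le c_{\mathrm{mp}}(u)$ on the basin boundary); without it you obtain only subsequential limits that are index-one critical points with $F\le c_{\mathrm{mp}}(u)$, and the convergence $c_{\mathrm{mp}}^{(n)}(u)\to c_{\mathrm{mp}}(u)$ does not follow. Finally, the weak-to-strong $H^1$ upgrade at the critical growth $|V'(s)|\le C(1+|s|^5)$ in $d=3$ cannot be done by compact embeddings alone (one is pairing an $L^{6/5}$-bounded sequence with convergence only in $L^{p}$, $p<6$); it goes through because the sextic leading term makes the principal part of $V'$ monotone, so a Minty--Browder/energy-convergence argument closes it --- you flag this, but it is the one computation that actually has to be carried out.
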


\begin{theorem}[Graph convergence of gradient flows]\label{thm:flows-mosco}
Under $\Gamma$-- and Mosco convergence, the $L^2$--gradient flows of
$F_n(\cdot;u)$ converge in graph sense to that of $F(\cdot;u)$,
locally uniformly in time and uniformly for $u$ in compact subsets of $\mathcal U$.
Hence the unstable manifolds of $\sigma^\dagger_n(u)$ converge
to that of $\sigma^\dagger(u)$.
\end{theorem}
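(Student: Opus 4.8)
The statement is, in essence, the stability theory of gradient flows under Mosco convergence, combined with the local invariant--manifold construction of Theorem~\ref{thm:inv-manifolds}; the only genuinely new demands are the uniformity in $u$ and the $C^1$--convergence of the unstable manifolds. \textbf{Step 1 (reduction to the monotone case and Attouch's theorem).} Because $a_6>0$, the map $s\mapsto V''(s;u)$ is bounded below by some $-\lambda$ with $\lambda\ge0$ that can be taken independent of $n$ and of $u$ on compact subsets of $\mathcal U$ (the discretizations $F_n$ inheriting this from the ``same polynomial structure''). Hence $\tilde F_n(\cdot;u):=F_n(\cdot;u)+\tfrac{\lambda}{2}\|\cdot\|^2$ and $\tilde F(\cdot;u):=F(\cdot;u)+\tfrac{\lambda}{2}\|\cdot\|^2$ are proper, lower semicontinuous and \emph{convex} on $\mathcal H=L^2(\Omega)$, with maximal monotone subdifferentials $\partial\tilde F_n=\partial F_n+\lambda I$. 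Adding the $\|\cdot\|^2$--continuous perturbation preserves $\Gamma$--convergence and equicoercivity, so $\tilde F_n(\cdot;u)\to\tilde F(\cdot;u)$ in Mosco sense (for convex functionals, $\Gamma$--$L^2$--convergence with equicoercivity is Mosco convergence), which is exactly the input to Attouch's theorem: Mosco convergence of the convex energies is equivalent to graph convergence of the $\partial\tilde F_n(\cdot;u)$, i.e.\ to resolvent convergence $(I+\tau\,\partial\tilde F_n(\cdot;u))^{-1}\to(I+\tau\,\partial\tilde F(\cdot;u))^{-1}$ strongly on $\mathcal H$ for every $\tau>0$, uniformly for $u$ in compacts (this is where the assumed uniform Mosco convergence of the subdifferentials is used directly).

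\textbf{Step 2 (convergence of the flows).} By the nonlinear Trotter--Kato theorem of Brézis--Pazy, resolvent convergence of the maximal monotone operators $\Gamma\,\partial\tilde F_n(\cdot;u)$ implies that the contraction semigroups $S_n^u(t):=e^{-t\Gamma\partial\tilde F_n(\cdot;u)}$ converge to $S^u(t):=e^{-t\Gamma\partial\tilde F(\cdot;u)}$ in the sense that $\sigma_{0,n}\to\sigma_0$ in $\mathcal H$ forces $S_n^u(t)\sigma_{0,n}\to S^u(t)\sigma_0$ in $\mathcal H$, uniformly for $t$ in bounded intervals. The genuine $L^2$--gradient flow of $F_n(\cdot;u)$ solves $\partial_t\sigma=-\Gamma\partial\tilde F_n(\sigma;u)+\Gamma\lambda\sigma$, a bounded linear (hence globally Lipschitz) perturbation of the monotone flow; such perturbations propagate through Duhamel's formula and Grönwall's lemma, so the gradient flows of $F_n(\cdot;u)$ converge in the graph sense to that of $F(\cdot;u)$, locally uniformly in time. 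Uniformity for $u$ in compact subsets of $\mathcal U$ follows by a subsequence/contradiction argument: if convergence failed along $u_k\to u_\star$, $t_k\to t_\star$ and convergent data, one would contradict the single--$u$ statement at $u_\star$, using the joint continuity of $u\mapsto\partial F(\cdot;u)$ (from the $C^1$ parameter dependence and Kato perturbation theory of Section~\ref{sec:transition-spectrum}) together with the $u$--uniform semiconvexity constant $\lambda$.

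\textbf{Step 3 (convergence of unstable manifolds).} Fix $u\notin\mathcal D$. By Theorem~\ref{thm:mp-gamma} the discrete saddles satisfy $\sigma^\dagger_n(u)\to\sigma^\dagger(u)$ in $\mathcal H$; elliptic bootstrapping as in Section~\ref{sec:deterministic} upgrades this to convergence in $H^2(\Omega)\hookrightarrow C(\overline\Omega)$ (since $d\le3$), whence $V''(\sigma^\dagger_n(u))\to V''(\sigma^\dagger(u))$ in $L^\infty(\Omega)$ and the Hessians $\mathcal L_{\sigma^\dagger_n,u}=-\kappa\Delta+V''(\sigma^\dagger_n(u))$ converge to $\mathcal L_{\sigma^\dagger,u}$ in norm--resolvent sense. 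Consequently the spectral data converge: $\lambda_-^{(n)}\to\lambda_-<0$, $\lambda_+^{(n)}\to\lambda_+>0$ with a gap that is uniform on compacts of $\mathcal U\setminus\mathcal D$, the spectral projections converge in operator norm, and the negative eigenfunction $e_-^{(n)}\to e_-$ in $\mathcal H$. The linearized semigroups therefore admit an exponential dichotomy with $n$--independent constants, so the Lyapunov--Perron fixed point producing the $C^k$ graphs $W^{\rm u}_{n,\mathrm{loc}}$ over $\mathrm{span}\{e_-^{(n)}\}$ in the proof of Theorem~\ref{thm:inv-manifolds} is a uniform contraction on a fixed small ball; its fixed point depends continuously on the data, giving $W^{\rm u}_{n,\mathrm{loc}}\to W^{\rm u}_{\mathrm{loc}}$ in $C^1$. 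Sweeping forward by the flows of Step~2, $W^{\rm u}(\sigma^\dagger_n(u))=\bigcup_{t\ge0}S_n^u(t)\,W^{\rm u}_{n,\mathrm{loc}}\to\bigcup_{t\ge0}S^u(t)\,W^{\rm u}_{\mathrm{loc}}=W^{\rm u}(\sigma^\dagger(u))$ in the local Hausdorff (graph) sense on compact time windows, uniformly in $u$ on compacts of $\mathcal U\setminus\mathcal D$.

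\textbf{Main obstacle.} The delicate point is Step~3: strong resolvent convergence of the saddle Hessians alone does not transport the exponential dichotomy with $n$--uniform constants, yet that uniformity is exactly what makes the Lyapunov--Perron contraction (and hence the $C^1$ convergence of $W^{\rm u}$) work. This is what forces the elliptic regularity/bootstrap to $C(\overline\Omega)$ (available because $d\le3$) to obtain norm--resolvent convergence, together with the uniform spectral gap coming from the standing hypothesis that $u\notin\mathcal D$ holds uniformly on compacts; absent these, one would only obtain convergence of the unstable manifolds in a weaker, non--$C^1$ topology. A secondary nuisance is the nonconvexity of the sextic energy, resolved above by the semiconvex shift, which one must verify is compatible both with the assumed $\Gamma$/Mosco convergence and with the discretization family $\{F_n\}$.
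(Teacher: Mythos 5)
The paper states Theorem~\ref{thm:flows-mosco} without any proof (only \ref{thm:spectral} in that section carries even a proof idea), so the only basis for comparison is the strategy implicit in the hypotheses (Mosco convergence of the $L^2$--subdifferentials, the citation of Ambrosio--Gigli--Savar\'e and Kato); your proposal follows exactly that intended route and is sound in outline: the semiconvex shift exploiting $a_6>0$, Attouch's theorem turning Mosco convergence into resolvent convergence, Br\'ezis--Pazy/Trotter--Kato for the semigroups, and a uniform Lyapunov--Perron contraction for the unstable manifolds is the standard and correct way to fill this gap, and you correctly identify that norm--resolvent (not merely strong--resolvent) convergence of the saddle Hessians is what makes the dichotomy constants $n$--uniform.

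Two places in your sketch need more care than you give them. First, in Step~2 the ``Duhamel plus Gr\"onwall'' treatment of the linear term $\Gamma\lambda\sigma$ is not literally available for nonlinear semigroups; the clean fix is to stay within the $\lambda$--convex framework throughout, i.e.\ use the EVI$_\lambda$/minimizing--movement stability theory for semiconvex functionals (or compare implicit Euler schemes via Crandall--Liggett), for which Mosco convergence of the shifted functionals plus convergence of initial data directly yields locally uniform convergence of the flows with quasi-contraction constant $e^{\Gamma\lambda t}$. Second, in Step~3 the elliptic bootstrap from $L^2$ to $H^2\hookrightarrow C(\overline\Omega)$ is applied to $\sigma^\dagger_n(u)$, which solves the Euler--Lagrange equation of the \emph{discretized} functional $F_n$, not of $F$; upgrading the convergence there requires a discrete elliptic regularity / negative--norm argument using the conforming structure of the discretization (the same silent assumption underlying the paper's Theorem~\ref{thm:spectral}). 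Neither point invalidates the strategy, but both should be stated as hypotheses on $\{F_n\}$ (uniform semiconvexity constant, conforming discretizations with discrete regularity) if the argument is to be written out in full.
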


Let $\sigma^\ast_n(u)\to\sigma^\ast(u)$ be critical points (minima or index--one saddles)
with nondegenerate Hessians
$\mathcal L^\ast_{n,u}=-\kappa\Delta+V''(\sigma^\ast_n(u);u)$ and
$\mathcal L^\ast_{u}=-\kappa\Delta+V''(\sigma^\ast(u);u)$ on $H$.

\begin{theorem}[Spectral convergence]\label{thm:spectral}
For any fixed $J\in\mathbb N$, the lowest $J$ eigenvalues of
$\mathcal L^\ast_{n,u}$ converge to those of $\mathcal L^\ast_{u}$.
At saddles, the unique unstable eigenvalue satisfies
$\lambda_-^{(n)}(u)\to\lambda_-(u)<0$.
Furthermore, the zeta--regularized determinant ratios converge:
\[
\frac{\det(\mathcal L_{\mathrm{meta},n,u})}{\det{}'(\mathcal L_{\mathrm{sad},n,u})}
\longrightarrow
\frac{\det(\mathcal L_{\mathrm{meta},u})}{\det{}'(\mathcal L_{\mathrm{sad},u})}.
\]
\end{theorem}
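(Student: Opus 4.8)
The plan is to reduce all three assertions to the convergence of the potential coefficients $q_n:=V''(\sigma^\ast_n(u);u)\to q:=V''(\sigma^\ast(u);u)$ in a suitable topology, after which the eigenvalue statements follow from elementary perturbation/min--max arguments and the determinant statement from a Mellin/heat--trace analysis of the zeta functions. First I would upgrade the convergence of the critical points. Equicoercivity bounds $\{\sigma^\ast_n(u)\}$ in $H^1(\Omega)$; passing to a weak limit in the discrete equilibrium relation $0\in\partial F_n(\sigma^\ast_n(u);u)$ on the conforming subspaces $V_n\subset V$ (with $\bigcup_n V_n$ dense) identifies the limit as a critical point of $F(\cdot;u)$, which by Theorem~\ref{thm:mp-gamma} and nondegeneracy must be $\sigma^\ast(u)$. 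Testing the equilibrium relation with $\sigma^\ast_n(u)$, using the compact embedding $H^1\hookrightarrow L^6$ ($d\le3$) and the polynomial growth of $V'$, gives $\|\nabla\sigma^\ast_n(u)\|_{L^2}\to\|\nabla\sigma^\ast(u)\|_{L^2}$, hence strong $H^1$ convergence and therefore $q_n\to q$ in $L^{3/2}(\Omega)=L^{d/2}(\Omega)$. For the determinant statement I would additionally invoke elliptic regularity: the limit $\sigma^\ast(u)$ lies in $H^2\cap C(\overline\Omega)$ by Section~\ref{sec:deterministic}, and under the discrete elliptic estimates available for the conforming scheme the $\sigma^\ast_n(u)$ are uniformly bounded in $H^2\hookrightarrow C(\overline\Omega)$; combined with a.e.\ convergence this yields $q_n\to q$ in every $L^p(\Omega)$, $p<\infty$, with $\sup_n\|q_n\|_{L^\infty}<\infty$, and (when the sharp heat--coefficient analysis below needs it) in $C^2(\overline\Omega)$.

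For the eigenvalue claims, $\mathcal L^\ast_{n,u}$ and $\mathcal L^\ast_{u}$ are self--adjoint on $L^2(\Omega)$, share the principal part $-\kappa\Delta$, and differ by the multiplication operator $q_n-q$; on $H^1(\Omega)$ one has $|\langle(\mathcal L^\ast_{n,u}-\mathcal L^\ast_{u})\xi,\xi\rangle|\le\|q_n-q\|_{L^{3/2}}\|\xi\|_{L^6}^2\le C\|q_n-q\|_{L^{3/2}}\|\xi\|_{H^1}^2$, while strong $L^{d/2}$ convergence makes $\{q_n\}$ uniformly $(-\Delta)$--form--bounded with relative bound zero, so the $\mathcal L^\ast_{n,u}$ are uniformly bounded below and uniformly $H^1$--coercive modulo $L^2$, with form domain $H^1$ compactly embedded in $L^2$. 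Courant--Fischer then gives, for each fixed $j$, $\lambda_j(\mathcal L^\ast_{n,u})\to\lambda_j(\mathcal L^\ast_{u})$ with multiplicity: the upper bound is tested on the span of the first $j$ eigenfunctions of $\mathcal L^\ast_{u}$ (finite $H^1$--norms), the lower bound on the span of the first $j$ eigenfunctions of $\mathcal L^\ast_{n,u}$ (uniformly bounded $H^1$--norms once the upper bound pins their eigenvalues). Taking $j\le J$ proves the first claim; since at the saddle $\lambda_-(u)<0$ is simple and isolated, the $j=1$ case gives $\lambda_-^{(n)}(u)\to\lambda_-(u)<0$, so $\lambda_-^{(n)}(u)<0<\mu_1^{(n)}(u)$ for $n$ large, which is the second claim. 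Equivalently one may argue via norm--resolvent convergence, from the second resolvent identity and the composition bound $(\mathcal L^\ast_{n,u}+\mu)^{-1}(q_n-q)(\mathcal L^\ast_{u}+\mu)^{-1}:L^2\to L^\infty\hookrightarrow L^{3/2}\hookrightarrow H^{-1}\to H^1\hookrightarrow L^2$, whose operator norm is $O(\|q_n-q\|_{L^{3/2}})$.

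For the determinant ratio I would treat numerator and denominator (the $\zeta$--determinant of the positive operator obtained by deleting the simple negative mode) separately. Fix a cutoff $N$ and split $\zeta_{\mathcal L}(s)=\sum_{j\le N}\lambda_j^{-s}+\zeta_{\mathcal L}^{>N}(s)$; the low modes contribute $\prod_{j\le N}\lambda_j$, convergent by the first claim. For the tail use $\zeta_{\mathcal L}^{>N}(s)=\Gamma(s)^{-1}\int_0^\infty t^{s-1}\big(\operatorname{Tr}e^{-t\mathcal L}-\sum_{j\le N}e^{-t\lambda_j}\big)dt$ and decompose $\tfrac{d}{ds}\zeta_{\mathcal L}^{>N}(0)$ into the finitely many small--time heat coefficients $\{a_\ell(\mathcal L)\}_{\ell\le d}$, a renormalized integral $\int_0^1 t^{-1}\big(\operatorname{Tr}e^{-t\mathcal L}-\sum_\ell a_\ell t^{(\ell-d)/2}\big)dt$, and the exponentially convergent tail $\int_1^\infty t^{-1}\operatorname{Tr}e^{-t\mathcal L}\,dt$. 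Each term converges in $n$: the last by dominated convergence with dominating function $Ce^{-ct}/t$ (uniform since $\lambda_1^{(n)}\ge c>0$ by Step~2 and $\operatorname{Tr}e^{-\mathcal L_n}\le C$); the $a_\ell$ are integrals over $\Omega$ of universal polynomials in $\kappa$, the fixed boundary geometry, and $q_n,\nabla q_n,\dots$, convergent once $q_n\to q$ in $C^2$, and in the ratio all geometry--only coefficients (in particular the Weyl term $(4\pi\kappa)^{-d/2}|\Omega|$) cancel identically so only $\int q_n$-- and $\int q_n^2$--type contributions remain; the renormalized integral converges by dominated convergence once $\operatorname{Tr}e^{-t\mathcal L_n}\to\operatorname{Tr}e^{-t\mathcal L}$ for each $t>0$ and the renormalized integrand is dominated by $Ct^{-1/2}$ uniformly in $n$, the latter from the uniform heat--kernel remainder estimate afforded by the uniform $C^2$--bounds of Step~1. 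The pointwise heat--trace convergence comes from Duhamel's formula
\[
e^{-t\mathcal L_n}-e^{-t\mathcal L}=-\int_0^t e^{-(t-r)\mathcal L_n}\,(q_n-q)\,e^{-r\mathcal L}\,dr,
\]
split at $r=t/2$ so that in each half one factor is a heat operator at time $\ge t/2$, hence trace class with $\mathcal S_1$--norm $\le C(t/2)^{-d/2}$, the other a contraction; this gives $\|e^{-t\mathcal L_n}-e^{-t\mathcal L}\|_{\mathcal S_1}\le Ct^{1-d/2}\|q_n-q\|_{L^\infty}\to0$. Assembling the pieces yields $\det(\mathcal L_{\mathrm{meta},n,u})\to\det(\mathcal L_{\mathrm{meta},u})$ and $\det{}'(\mathcal L_{\mathrm{sad},n,u})\to\det{}'(\mathcal L_{\mathrm{sad},u})$, hence convergence of the ratio; uniformity for $u$ in compacta follows from the joint continuity of all the above estimates in $u$.

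The main obstacle is the ultraviolet--sensitive part of the determinant statement, namely the stability of the small--time heat invariants $a_\ell(\mathcal L^\ast_{n,u})$ with $\ell\le d$: for $d=3$ these involve up to two derivatives of the potential coefficient (and, with boundary, its normal trace), so one genuinely needs the $\Gamma$/Mosco--supplied $L^2$ convergence of critical points upgraded to a $C^2(\overline\Omega)$ convergence. This rests on \emph{discrete} elliptic regularity for the conforming scheme, which holds for standard Galerkin/spectral discretizations but is an extra structural input; alternatively one restricts to the periodic or homogeneous Neumann settings, where the troublesome boundary heat coefficients are absent or vanish identically (since $\partial_n V''(\sigma^\ast)=V'''(\sigma^\ast)\partial_n\sigma^\ast=0$ there), so that only $L^1$-- and $L^2$--stable bulk integrals survive in the meta--saddle ratio. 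Pinning this down, uniformly in $u$, is where the real work lies; the eigenvalue statements and the low--mode part of the determinant are, by contrast, routine consequences of form convergence.
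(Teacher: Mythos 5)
Your plan follows essentially the same route as the paper's (very terse) proof idea: since $\mathcal L^\ast_{n,u}$ and $\mathcal L^\ast_{u}$ differ only by the multiplication operator $V''(\sigma^\ast_n(u);u)-V''(\sigma^\ast(u);u)$, form/norm--resolvent convergence yields the low--eigenvalue and unstable--mode statements (your Courant--Fischer argument is interchangeable with the paper's appeal to Kato perturbation theory), and the determinant ratio is handled by the same heat--trace/Mellin representation with uniform short--time asymptotics, low modes treated individually, and cancellation of the geometry--only Weyl coefficients. Your one substantive addition is to make explicit what the paper's sketch hides behind ``the conforming nature of discretizations'': the coefficients $a_\ell$ with $\ell\le d$ (for $d=3$ including boundary terms with $\partial_n V''(\sigma^\ast)$) need convergence of $V''(\sigma^\ast_n)$ well beyond the $L^{3/2}$ level that $\Gamma$/Mosco convergence supplies, so either discrete elliptic regularity or the periodic/homogeneous--Neumann reduction must be invoked --- a correct and worthwhile refinement, not a flaw in your argument.
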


\emph{Proof idea.}
Norm--resolvent convergence of Hessians at equilibria follows from the
conforming nature of discretizations; analytic perturbation theory (Kato)
ensures convergence of low modes.
For determinants, employ the heat--trace difference representation
and uniform short--time asymptotics; the Weyl terms cancel between
well and saddle, while finitely many low modes converge individually.

Define the barrier $\Delta F_n(u):=F_n(\sigma^\dagger_n(u);u)-F_n(\sigma_{\mathrm{meta},n}(u);u)$
and its continuum counterpart $\Delta F(u)$.

\begin{corollary}[Stability of Eyring--Kramers asymptotics]\label{cor:ek-stability}
Under Theorems~\ref{thm:mp-gamma}--\ref{thm:spectral},
\[
\Delta F_n(u)\to\Delta F(u),\qquad
\lambda_-^{(n)}(u)\to\lambda_-(u),\qquad
\frac{\det(\mathcal L_{\mathrm{meta},n,u})}{\det{}'(\mathcal L_{\mathrm{sad},n,u})}
\to
\frac{\det(\mathcal L_{\mathrm{meta},u})}{\det{}'(\mathcal L_{\mathrm{sad},u})}.
\]
Hence the discrete Eyring--Kramers prefactors and mean escape times
converge to their continuum limits uniformly for $u$ in compact subsets of
$\mathcal U\setminus\mathcal D$.
\end{corollary}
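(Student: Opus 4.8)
The plan is to read Corollary~\ref{cor:ek-stability} as an assembly of the three convergence statements already established: the barrier height $\Delta F_n(u)$, the unstable curvature $\lambda_-^{(n)}(u)$, and the zeta--determinant ratio each enter the Eyring--Kramers right--hand side \eqref{eq:EK-final} through an explicit \emph{continuous} functional form, so once each ingredient converges locally uniformly in $u$, the full prefactor and mean escape time follow by composition with elementary continuous maps. First I would record the pointwise statements. For fixed $u\in\mathcal U\setminus\mathcal D$, Theorem~\ref{thm:mp-gamma} gives $\min F_n(\cdot;u)\to\min F(\cdot;u)$ and $c_{\mathrm{mp}}^{(n)}(u)\to c_{\mathrm{mp}}(u)$ together with $L^2$--precompactness of the discrete minimizers and saddles toward their continuum counterparts; subtracting the two scalar limits yields $\Delta F_n(u)=F_n(\sigma^\dagger_n(u);u)-F_n(\sigma_{\mathrm{meta},n}(u);u)\to\Delta F(u)$. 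Theorem~\ref{thm:spectral} gives $\lambda_-^{(n)}(u)\to\lambda_-(u)<0$ and convergence of $\det(\mathcal L_{\mathrm{meta},n,u})/\det{}'(\mathcal L_{\mathrm{sad},n,u})$. These are exactly the three displayed convergences of the corollary, but a priori only pointwise in $u$.

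Next I would upgrade to local uniformity by the standard subsequence--contradiction device. Fix a compact $\mathcal K\subset\mathcal U\setminus\mathcal D$; if, say, $\Delta F_n\to\Delta F$ failed uniformly on $\mathcal K$, there would exist $\delta>0$, $n_k\to\infty$, and $u_k\in\mathcal K$ with $|\Delta F_{n_k}(u_k)-\Delta F(u_k)|\ge\delta$, and after passing to a subsequence $u_k\to u_\star\in\mathcal K$. The hypotheses supply the joint control needed to reach a contradiction: Mosco convergence of the subdifferentials is assumed uniform on compacts of $\mathcal U$, the mountain--pass geometry is uniform in $n$, the Cerf continuation of Theorem~\ref{thm:cerf} makes $u\mapsto(\sigma_{\mathrm{meta}}(u),\sigma^\dagger(u))$ continuous with norm--resolvent continuous Hessians, and on $\mathcal K$ the discriminant is avoided so the equilibria are uniformly Morse with $\inf_{\mathcal K}|\lambda_-|>0$ and $\inf_{\mathcal K}\lambda_+>0$. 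A diagonal extraction along $(n_k,u_k)$ then forces $\Delta F_{n_k}(u_k)\to\Delta F(u_\star)$, while continuity of the continuum barrier gives $\Delta F(u_k)\to\Delta F(u_\star)$, contradicting the lower bound $\delta$. The same argument applies verbatim to $\lambda_-^{(n)}$ and to the determinant ratio.

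Finally I would propagate these uniform convergences through the EK formula. Write
\[
P(u)=\frac{2\pi}{\Gamma\,|\lambda_-(u)|}\sqrt{\frac{\det(\mathcal L_{\mathrm{meta},u})}{\det{}'(\mathcal L_{\mathrm{sad},u})}},\qquad
P_n(u)=\frac{2\pi}{\Gamma\,|\lambda_-^{(n)}(u)|}\sqrt{\frac{\det(\mathcal L_{\mathrm{meta},n,u})}{\det{}'(\mathcal L_{\mathrm{sad},n,u})}}.
\]
On $\mathcal K$ the continuum quantities $|\lambda_-(u)|$ and $\det(\mathcal L_{\mathrm{meta},u})/\det{}'(\mathcal L_{\mathrm{sad},u})$ are continuous and stay in a compact subinterval of $(0,\infty)$, and $x\mapsto 1/x$, $x\mapsto\sqrt{x}$ are uniformly continuous there; since the two factors converge uniformly by the previous step, $P_n\to P$ uniformly on $\mathcal K$ and the $P_n$ are uniformly bounded. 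For each fixed $\varepsilon\in(0,1]$, $\Delta F_n\to\Delta F$ uniformly on $\mathcal K$ with $\sup_{\mathcal K}|\Delta F_n|<\infty$, so $\exp(\Delta F_n/\varepsilon^2)\to\exp(\Delta F/\varepsilon^2)$ uniformly by uniform continuity of $\exp$ on bounded sets; multiplying the uniformly convergent prefactor by the uniformly convergent exponential gives uniform convergence of the discrete right--hand side of \eqref{eq:EK-final} to its continuum value on $\mathcal K$, and inverting yields the corresponding statement for $\lambda_1^\varepsilon(D)$. This is precisely the assertion of the corollary.

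I expect the genuine obstacle to lie in the uniformity feeding the second step, not in the composition itself. One must know that the Palais--Smale constant controlling the mountain--pass compactness, the spectral gap $\lambda_+$, and the short--time heat--trace remainder bounding the zeta--determinant tails are all uniform in $n$ \emph{and} in $u\in\mathcal K$. Compactness of $\mathcal K$ together with $\mathcal K\cap\mathcal D=\emptyset$ forces the low spectrum and Morse indices into a fixed regime, so these uniformities are available in principle; the delicate point is the uniform heat--trace estimate, since it is what guarantees that the Weyl cancellation in the determinant ratio is itself uniform over $\mathcal K$ and $n$ — everything downstream is continuity of elementary functions and a diagonal argument.
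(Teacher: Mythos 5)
Your proposal is correct and follows essentially the same route as the paper, which treats the corollary as a direct assembly of Theorems~\ref{thm:mp-gamma}--\ref{thm:spectral}: convergence of critical values gives $\Delta F_n\to\Delta F$, Theorem~\ref{thm:spectral} gives the eigenvalue and determinant-ratio limits, and continuity of the Eyring--Kramers expression \eqref{eq:EK-final} in these three inputs, together with the uniform-on-compacts hypotheses away from $\mathcal D$, yields the stated uniform convergence of prefactors and mean escape times. The only point worth stating explicitly (which your ``subtract the two scalar limits'' step slightly glosses) is that $F_n(\sigma_{\mathrm{meta},n}(u);u)$ is the value at a \emph{local} minimizer, so its convergence comes from the localized $\Gamma$/Mosco argument along the metastable branch rather than from $\min F_n\to\min F$ alone; this is exactly what the uniform mountain-pass geometry and Cerf continuation hypotheses are invoked for, so it is a refinement of wording, not a gap.
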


Collecting these results yields a coherent deformation picture.
Outside the discriminant $\mathcal D$, the metastable landscape is
structurally stable: the number and indices of critical points remain invariant,
and the quantities $\sigma_{\mathrm{meta}}(u)$, $\sigma^\dagger(u)$,
$\Delta F(u)$, $\lambda_-(u)$, and determinant ratios vary smoothly with $u$.
Crossing $\mathcal D$ produces only fold $(A_2)$ or cusp $(A_3)$
catastrophes in generic one-- and two--parameter sections,
while $\Gamma$-- and Mosco convergence ensure that this scenario
persists under numerical refinement and parameter deformations.
All $\Gamma$--limits are taken with respect to the $L^2(\Omega)$ topology,
which naturally aligns with the state space of the gradient flow and
the large--deviation action functional, and guarantees compactness and
spectral convergence of $\mathcal L_{\sigma,u}$.

In summary, replacing the physical parameters $(T,\mu_B)$
by an abstract control $u\in\mathcal U$ highlights the universal
mathematical structure behind metastability and its deformations.
The combination of Morse/Cerf topology on $\mathcal U\setminus\mathcal D$
and $\Gamma$--/\!Mosco convergence under discretization
provides a parameter--robust and discretization--consistent foundation
for the Eyring--Kramers law in Landau--type gradient systems.

\section{Conclusion}

The analytical framework developed in this work provides a rigorous and unified description of metastable stochastic transitions in Landau-type field theories inspired by QCD phenomenology. By integrating variational analysis, large-deviation principles, and spectral regularization of elliptic operators, we have established that the escape from a metastable basin in an infinite-dimensional gradient system exhibits the same mathematical structure as the classical Eyring–Kramers law. The exponential term in the mean transition time is determined by the free-energy barrier between the metastable minimum and the critical saddle, while the prefactor encodes the underlying geometry and spectral content of the landscape through zeta–regularized determinant ratios and the unique unstable eigenvalue of the linearized operator. This correspondence offers a transparent bridge between microscopic fluctuations and macroscopic transition kinetics, grounding phenomenological nucleation theory in a mathematically controlled setting.

From the perspective of QCD-inspired effective models, the analysis elucidates how the topology of the chiral–Polyakov potential governs the rate and mechanism of phase conversion. The index-one saddle of the Landau functional corresponds to the critical droplet mediating the decay of the metastable phase, and the associated energy barrier determines the characteristic timescale for hadronic-to-partonic or chiral-symmetry restoration processes. Near spinodal or tricritical points, the scaling relations 
\[
\Delta F \sim (u-u_c)^{3/2}, \qquad |\lambda_{-}| \sim (u-u_c)^{1/2},
\]
derived directly from the functional structure, capture critical slowing down and enhanced fluctuations observed in lattice and model studies. These results highlight the universality of the fold and cusp catastrophes that organize the QCD phase diagram.

Beyond QCD, the present construction situates metastable dynamics within a general mathematical framework encompassing condensed-matter systems, soft-matter nucleation, cosmological phase transitions, and stochastic partial differential equations. The use of $\Gamma$-convergence ensures that discretizations or coarse-grained approximations preserve essential features such as barrier geometry, spectral curvature, and critical topology. Consequently, the Eyring–Kramers law remains quantitatively valid under systematic refinement, offering a robust tool for studying metastability across multiple physical contexts.

Several conceptual insights emerge from this study.  
First, metastability in field-theoretic systems is a geometrically organized phenomenon: transitions are mediated by index-one saddles whose existence and stability are guaranteed by variational topology (Ambrosetti–Rabinowitz theory) and whose asymptotic approach is controlled by Lojasiewicz–Simon inequalities.  
Second, the fluctuation determinants in the prefactor connect statistical mechanics to spectral geometry through the analytic continuation of zeta functions, linking probabilistic transition rates with geometric spectral invariants.  
Third, the stochastic dynamics admits a large-deviation formulation à la Freidlin–Wentzell, providing a probabilistic foundation for Langer’s phenomenological picture of nucleation.

In physical terms, the Eyring–Kramers law derived here constitutes a universal bridge between microscopic noise and macroscopic irreversibility. The activation rate, exponentially sensitive to the free-energy barrier yet modulated by a calculable prefactor, unifies chemical-reaction kinetics, condensed-matter nucleation, and QCD phase conversion under a common analytical archetype. The proven structural stability of the asymptotic law under parameter deformations—ensured by $\Gamma$-convergence and catastrophe continuity—demonstrates its persistence across a broad class of field-theoretic models sharing analogous symmetry structures.

Future research directions include extending this framework to underdamped and Hamiltonian–Langevin regimes, where inertia and oscillatory modes modify the prefactor while leaving the exponential barrier intact, and to non-reversible stochastic dynamics relevant for driven QCD matter and chiral plasma instabilities. Another promising avenue is the coupling of this continuum theory with lattice data or functional-renormalization-group potentials to obtain quantitative estimates of metastable lifetimes in phenomenologically constrained models. Finally, the mathematical interplay among metastability, Morse theory, and spectral flow opens pathways toward a geometric classification of transition routes in high-dimensional field spaces.

In summary, this work provides a coherent theoretical foundation for metastable stochastic transitions in QCD-like and related systems, uniting the classical intuition of Eyring, Kramers, and Langer with the modern analysis of infinite-dimensional stochastic PDEs. The synthesis of functional geometry, spectral theory, and probability reveals a universal mechanism by which rare events govern the macroscopic evolution of complex physical systems.

\end{document}